\PassOptionsToPackage{square, numbers, sort}{natbib}
\documentclass[%
reprint,
 amsmath,
 amssymb,
 aps,
]{revtex4-2}

\usepackage{graphicx}
\graphicspath{ {./pictures/} }
\usepackage{placeins}
\usepackage{csquotes}
\usepackage{enumitem}
\usepackage{amsmath}
\usepackage{amsfonts}
\usepackage{amssymb}
\usepackage{amsthm} %%% for changing the styles of math enviroments
\usepackage{tcolorbox} %%% for creating color boxes e.g. enviroments
\usepackage{mathtools}
\usepackage{nicefrac}
\usepackage{bm}
\usepackage{physics}%%% norm symbol
\usepackage{esint} %%% for other integrals
\usepackage[title]{appendix}

\newcommand{\ii}{\mathrm{i}}
\newcommand{\ee}{\mathrm{e}}

\newcommand{\function}[3]{#1: #2 \rightarrow #3}
\newcommand{\R}{\mathbb{R}}

\newcommand{\intd}{\;\mathrm{d}}
\newcommand{\C}{\mathbb{C}}

\usepackage{xcolor}
\usepackage[normalem]{ulem}

\theoremstyle{plain}
\newtheorem{theorem}{Theorem}[section]

\theoremstyle{plain}

\theoremstyle{plain}

\theoremstyle{definition}
\newtheorem{definition}[theorem]{Definition}

\theoremstyle{definition}

\theoremstyle{definition}

\theoremstyle{remark}
\newtheorem*{remark}{Remark}
\tcolorboxenvironment{theorem}{
	colback=blue!5!white,
	boxrule=0pt,
	boxsep=1pt,
	left=2pt,right=2pt,top=2pt,bottom=2pt,
	oversize=2pt,
	sharp corners,
	before skip=\topsep,
	after skip=\topsep,
}

\tcolorboxenvironment{definition}{
	colback=yellow!5!white,
	boxrule=0pt,
	boxsep=1pt,
	left=2pt,right=2pt,top=2pt,bottom=2pt,
	oversize=2pt,
	sharp corners,
	before skip=\topsep,
	after skip=\topsep,
}

\tcolorboxenvironment{corollary}{
	colback=red!5!white,
	boxrule=0pt,
	boxsep=1pt,
	left=2pt,right=2pt,top=2pt,bottom=2pt,
	oversize=2pt,
	sharp corners,
	before skip=\topsep,
	after skip=\topsep,
}

\makeatletter
\renewcommand*{\@fnsymbol}[1]{%
\ensuremath{\ifcase#1 \or  \or  \or * \or \dagger \or \ddagger \or
   \mathsection \or \mathparagraph \or \| \or ** \or \dagger\dagger
   \or \ddagger\ddagger \else \@ctrerr\fi}}
\makeatother

\begin{document}

\preprint{APS/123-QED}

\title{Curvature-induced bound states in quantum wires}

\author{Tim Bergmann$^{\ast}$}
 \thanks{$\ast$ These authors contributed equally to the work.}
\author{Benjamin Schwager$^{\ast \dagger}$}
 \email{$\dagger$ benjamin.schwager@physik.uni-halle.de}
\author{Jamal Berakdar}

\affiliation{Institut für Physik, Marin-Luther-Universität Halle-Wittenberg, 06099 Halle, Germany }

\date{\today}

\begin{abstract}
    A classical particle under spatial constraints is strictly confined to live on a specific space manifold or path, but this assumption is incompatible with the zero-point fluctuations of a quantum particle. One way to describe quantum mechanics under constraints is the confinement potential approach (CPA). For a non-relativistic particle, the CPA maps the problem onto the solution of a Schrödinger-type equation in an isometrically embedded Riemannian submanifold of Euclidean space while the motion along orthogonal directions are decoupled and spatially confined. This approach respects quantum uncertainty, and one of its key results is the appearance of geometry- and metric-induced potentials that affect the stationary states and the dynamics of the particle. For particles constrained to different spaces, such as structures hosting sharp bents, vertices, wedges, conical apices, tips, or self-intersections, a formalism beyond the CPA is needed. Here, a step towards a CPA extension for irregular spaces is presented. After classifying the possible geometric irregularities concerning the CPA formalism, the presentation is focused on a sharply bent quantum wire modeled as an embedded curve with singular (but absolute integrable) curvature. For a subclass fulfilling the additional requirement that the geometric potential is a distribution of first order, a solution scheme for the confined Schrödinger equation is presented based on singular Sturm-Liouville theory and operator theoretic methods. The analytical considerations and numerical simulations evidence the existence of curvature-induced bound states with non-differentiable wave functions localized around the singular point, with an extension well beyond the singularity. Furthermore, a multitude of scattering states appear that may affect the transport and optical properties of the system.
\end{abstract}

\keywords{bound state, singular curvature} % Use showkeys class option if keyword display desired

\maketitle

%
%%%%%%%%%%%%%%%%%%%%%%%%%%%%%%%%%%%%%%%%%%%%%%%%%%%%%%%%%%%%%%%%%%%%%%%%%%%%%%%%%%%%%%%%%%%%%%%%%%%%%%%%%%%%%%%%%%%%%%%%%%%%%%%%%%%
%%%%%%%%%%%%%%%%%%%%%%%%%%%%%%%%%%%%%%%%%%%%%%%%%%%%%%%%%%%%%%%%%%%%%%%%%%%%%%%%%%%%%%%%%%%%%%%%%%%%%%%%%%%%%%%%%%%%%%%%%%%%%%%%%%%
% Section I
\section{Introduction}
\label{sec:Introduction}
In a quantum wire particles are confined to effectively move in one dimension while being confined in the others. Known examples are charge carriers in carbon nanotubes, semiconductor-based nanowires, and high-mobility quantum wires \citep{deshpande2008one, bockrath1999luttinger, schafer2008new, 10.1063/1.4966546, 10.1063/5.0198590, PhysRevLett.77.4612}. Transport in a chain of $\pi$-conjugated molecules \citep{doi:10.1021/acs.accounts.7b00493} can also be viewed as effectively onedimensional and is considered a key element in molecular electronics \citep{doi:10.1021/acs.chemrev.5b00680, Aradhya2013}. Of immediate relevance to this work are bent wires which can be realized, e.\,g., via mechanically strained wires (also called mechanically controlled break junctions). In this setting the wires are mechanically stressed  as to deform, and in some cases the wires may develop sections with a singular curvature \citep{mcbj, doi:10.1142/7434, doi:10.1126/science.1060294, https://doi.org/10.1002/anie.200352179}. Models have been developed to described the current-voltage characteristics in a transport experiment which is probably the most studied case.
\par
Here, we are concerned with the development of a model to describe quantum wires with strong local deformations, which in the extreme case result in a singular behavior. On a more fundamental theory level, which is the focus of this paper, there is a difference between the descriptions of particles that move freely  along  one-dimensional curve and of particles that move along a curve that is geometrically embedded in   higher-dimensional hyperspace. The former case is dealt with in a standard way by energetically confining the motion, for example by spatially dependent scalar potential that can be realized  via  appropriately applied spatially dependent voltages. The latter case of geometric embedding in a hyperspace is 
more delicate. In fact the embedding in a hyperspace may result in so-called geometry-induced effects such as the formation of bound states which should influence the transmittance and may provide a local energy gradient that attracts nearby objects.
%Here we are concerned with the possible formation of geometry-induced bound states due to locally strong deformations which results in the extreme case in a singular behavior of the quantum wire. Such states are expected to influence the transmittance and may provide a local energy gradient that attracts nearby objects. On a more fundamental theory level, which is the focus of this paper, there is a difference whether the particles are free to move along the curve, e.g., due to appropriately applied spatially dependent voltages, or whether the line of motion is embedded in higher-dimensional (hyper)space. The former case is dealt with in a standard way where the voltage enters the equation of motion as a spatially dependent scalar potential.
\par
%In the realm of nanotechnology there exists the field of investigating molecular structures, including the study of nanowires. For example, one way of constructing them is to utilize (carbon) nanotubes as channels for the drift of ions. Mathematically, situations such as this one mean the consideration of quantum mechanics under spatial constraints. 
A particular theory approach to the hyperspace situation is the confinement potential approach (CPA) \citep{Jensen1971, Costa1981, Costa1982, Maraner1995, Schuster2003, Maraner2008, Brandt2017}. In brief, one constructs an adapted Schrödinger equation to describe a particle whose motion is confined to  an isometrically embedded Riemannian submanifold $(\mathcal{M}, g)$ of (a higher-dimensional) Euclidean space from which the metric and connection are inherited. This case amounts to a dimensional reduction. Hallmarks of the hyperspace influence on the particle motion along a deformed line are the appearances of geometry-induced effects such as bound states, or gauge and potential-like fields.
\par
Generally, the CPA was considered (also in a relativistic extension) for spin-0 particles (Schrödinger and Klein-Gordon equations) \citep{Jensen1971, Costa1981, Maraner1995, Brandt2017, Ferrari2008} and spin-1/2 particles (Pauli and Dirac equations) \citep{Burgess1993, Brandt2016, SanchezMonroy2024} in the presence of electromagnetic or fields with spin-orbit coupling, and for different dimensions and co-dimensions of the target manifold \citep{Takagi1992, Maraner1993, Schuster2003, Maraner2008, Gravesen2018}. Usually, the quantum operators in the dimensionally-reduced theory
%working with the solutions of the resulting Schrödinger equation 
display correction terms which depend on geometric quantities of its underlying configuration space \citep{Liu2011, Brandt2015, Wang2017}. These so-called geometry-induced contributions  have been shown to determine scattering processes \citep{Oflaz2018, Meschede2023} and influence transport \citep{Ono2009, Ono2010, Cao2019, Serafim2021, Schwager2024, Schwager2026}, or induce topological phases \citep{Pandey2016}. Also, analog systems like photonic \citep{Longhi2007, Szameit2010, Schultheiss2010} or plasmonic \citep{Valle2010} wave guides as well as spin waves \citep{10.1063/5.0048891} and superconductors \citep{BOGUSH2025109736} exhibit such a behavior when  their equations of motion become mappable onto  a Schrödinger-like wave equation.
\par
However, as the construction of the CPA requires an isometrically embedded Riemannian submanifold of Euclidean space, it sets regularity conditions on both $\mathcal{M} \subset \mathbb{R}^{n}$ as well as the metric tensor field $g$ which it is equipped with. Thus, there are geometric limitations on the range of describable configuration spaces. Among the spaces that pose a problem  because the tangent and normal vector fields as well as derived geometric quantities such as curvature exhibit anomalies, are  spaces containing sharp bends, certain  wedges, vertices, tips, or cones as well as spaces containing contact points with themselves or self-intersections. While the cone case has been considered \citep{Filgueiras2008, Filgueiras2012, Poux2014, Kimouche2022}, the other cases either were never addressed or were considered with different approaches that did not focus on the pathologic behavior \citep{Furtado2023}, or those points were excluded from the space \citep{Pitelli2024}.
\par
The focus of this article is  on plane curves containing sharp bends, specifically accounting for diverging curvature. These can be regarded as a limiting case model for strongly bent one-dimensional wires. We present a mechanism to approximate the pathological case, where the quantum equation is ill-defined, by a family of convergent regular curves and justify that their limit can be declared a valid solution to the original problem. The necessary mathematical statements concerning the convergence are proved using operator theoretic methods. This regularization ansatz relies on singular Sturm-Liouville theory and builds on results obtained by \citep{A.M.Savchuk1999, Savchuk2003}. We belief that the scheme delivers a rigorous extension of the CPA in the sense that it enables the computation of eigenfunctions and eigenvalues of the Hamilton operator that describes particles confined to wires that contain a certain class of degeneracies.
\par
The article is organized as follows. Sec. \ref{sec:problem_analysis}  analyzes the general boundaries of the CPA and classifies the degenerate cases. 
Sec.~ \ref{sec:Model_Summary}  presents  the construction of a solution for a particular subcase, namely a degenerate plane curve with diverging curvature using a regularization method. This part is backed by the appendix, including detailed descriptions and some proofs. An example is studied in Sec.~\ref{sec:Example} where the existence of bound states induced by sharp bends is evidenced. Sec.~\ref{sec:Conclusion} summarizes the findings.
%
%%%%%%%%%%%%%%%%%%%%%%%%%%%%%%%%%%%%%%%%%%%%%%%%%%%%%%%%%%%%%%%%%%%%%%%%%%%%%%%%%%%%%%%%%%%%%%%%%%%%%%%%%%%%%%%%%%%%%%%%%%%%%%%%%%%
%%%%%%%%%%%%%%%%%%%%%%%%%%%%%%%%%%%%%%%%%%%%%%%%%%%%%%%%%%%%%%%%%%%%%%%%%%%%%%%%%%%%%%%%%%%%%%%%%%%%%%%%%%%%%%%%%%%%%%%%%%%%%%%%%%%
% Section II
%\begin{widetext}
\section{The Geometric Setting of the Confinement Potential Approach: Regular vs. Degenerate Cases}
\label{sec:problem_analysis}
The formalism of the CPA \citep{Jensen1971, Costa1981, Costa1982, Maraner1995, Schuster2003, Maraner2008, Brandt2017} constructs an effective Schrödinger equation on a configuration space of reduced dimension. To be carried out in its regular form when applied to a nonrelativistic scalar particle, it demands that the configuration space is an isometrically embedded Riemannian submanifold of Euclidean space, carrying a $C^{r}$-structure where $r\in\mathbb{N}\cup\{ \infty \}$ is sufficiently high such that derived geometric quantities such as curvature or torsion appearing in the respective context are well-defined. Often, $r \geq 2$ or $r\geq 3$ is required. For the differential geometric backgrounds we refer to \citep{Kolar1993, Carmo2013, Misner2017, Jost2017, Dajczer2019} and references therein.
\par
We work here with parametrization according to the following definition.
\begin{widetext}
\par
\begin{definition}[Parametrized differentiable (smooth) submanifold of Euclidean Space]
\label{def:Submanifold}
    A subset $\mathcal{M} \subset \mathbb{R}^{n}$ is called a \textit{differentiable (smooth)} $C^{r}$-$m$-\textit{manifold}, with differentiability class $r\in\mathbb{N}\cup\{ \infty \}$, if for all points $\textbf{x}_{0}\in\mathcal{M}$ there are a neighborhood $U_{\textbf{x}_{0}}\subset\mathbb{R}^{n}$ and a mapping $\mathcal{X}_{\textbf{x}_{0}}: \mathbb{R}^{m} \supset Q_{\textbf{x}_{0}} \rightarrow U_{\textbf{x}_{0}}\cap \mathcal{M} \subset \mathbb{R}^{n}$, $\textbf{q} \mapsto \mathcal{X}_{\textbf{x}_{0}}(\textbf{q}) \coloneqq \textbf{x}$ such that
    \begin{itemize}
        \item $\mathcal{X}_{\textbf{x}_{0}}\in C^{r}\left( \mathbb{R}^{m}, \mathbb{R}^{n} \right)$,
        \item $\mathcal{X}_{\textbf{x}_{0}}$ is a homeomorphism onto its image (w.\,r.\,t. the subspace topology induced by $(\mathbb{R}^{n}, \langle \cdot, \cdot \rangle_{2})$),
        \item $\mathrm{rank}\left( \mathrm{d}\mathcal{X}_{\textbf{x}_{0}}(\textbf{q}) \right) = m\ \ \forall\,\textbf{q}\in Q_{\textbf{x}_{0}}$.
    \end{itemize}
\end{definition}
\end{widetext}
\par
Let $g$ and $\nabla$ denote the metric and Levi-Civita connection that are canonically induced as the pull-backs of the Euclidean ones by the embedding $\iota: \mathcal{M} \hookrightarrow \mathbb{R}^{n}$, then the $C^{r}$-$m$-manifold $(\mathcal{M}, g) \subset (\mathbb{R}^{n}, \delta)$ is an isometrically (by construction) embedded Riemannian submanifold of Euclidean space with $C^{r}$-structure, because of the additional requirements on the parametrization fields. It is understood from the physical point of view that the subset $\mathcal{M}\subset \mathbb{R}^{n}$ generally is parametrized in the most well-behaved way possible, to avoid artificial pathologies. This includes utilizing a maximal atlas of the highest possible differentiability class and tailoring parameter spaces to exclude multiple coverings.
\par
The directional parts of those tangent vector fields that form the canonical coordinate frame field $\left\{ \textbf{t}_{\mu} \right\}_{\mu=1}^{m}$ are given by the columns of the Jacobi matrix of the parametrizations, $J_{\mathcal{X}_{\textbf{x}_{0}}} = \mathrm{d}\mathcal{X}_{\textbf{x}_{0}}$. Constructing the normal spaces and choosing an orthonormal normal frame field $\{ \textbf{n}_{i} \}_{i=m+1}^{n-m}$, we can parametrize a sufficiently small neighborhood of $\mathcal{M}$ by
\begin{align}
    \mathcal{Y}_{\textbf{x}_{0}}(\textbf{q}, \textbf{y}) = \mathcal{X}_{\textbf{x}_{0}}(\textbf{q}) + y^{i}\,\textbf{n}_{i}(\textbf{q})\ .
\end{align}
From this parametrization of space, an effective dimensionally reduced equation of motion can be derived by applying the CPA formalism. The details of the computation are available in the literature \citep{Jensen1971, Costa1981, Costa1982, Maraner1995, Schuster2003, Maraner2008, Brandt2017}. In summary, the restriction of the quantum motion to $\mathcal{M}$ is provided by the confining potential $V_{\mathrm{c}}$ which has to fulfill the following conditions: (i) it must be dependent on the normal displacement variables $\textbf{y}$ only, (ii) it has to possess a deep minimum on $\mathcal{M}$, i.\,e. around $\textbf{y} = \textbf{0}$, and (iii) it has to respect the gauge group of the dimensionally reduced theory, which is necessarily a subgroup of the isometry group of the higher-dimensional space. A canonical choice for this is the harmonic potential \citep{Schuster2003, Brandt2017}. Then, the CPA program requests dividing the tangent (extrinsic) and normal (intrinsic) coordinate degrees of freedom inside the Schrödinger equation and performing a similarity transformation to rescale the wave function appropriately, such that a probability density on $\left(  \mathcal{M}, g\right)$ is well-defined. After a perturbative expansion treatment, an effective extrinsic Schrödiger equation can be distilled, which describes the particle motion on the manifold $\left( \mathcal{M}, g\right)$.
\par
There are other subsets $\mathcal{M} \subset \mathbb{R}^{n}$, though, which do not fulfill the above definition of an isometrically embedded Riemannian $C^{r}$-$m$-submanifold of Euclidean space. But nevertheless, they could appear in physics considering (strongly) bent nanowires \citep{Sprung1992, Pitelli2024, Furtado2023}, conically shaped structures \citep{Filgueiras2008, Filgueiras2012, Poux2014, Kimouche2022}, Y-junctions \citep{Bibikov2008} or quantum graphs \citep{Kurasov2024}. Precisely, these are sets $\mathcal{M}$ containing a nonempty defect subset of points near which $\mathcal{M}$ fails to fulfill all of the above described characteristics. But when any of these is violated, the original CPA formalism breaks down in the sense that it cannot be carried out regularly anymore. To treat these situations, too, the formalism of the CPA needs to be extended. Often, generalized concepts are applied while solution strategies involve dissections of the prospective configuration space $\mathcal{M}$ and regularization theory. Some special cases in low dimensions have already been treated in the literature (see the remark below).
\par
In what follows, we will give an overview of different problems occurring for parametrized subsets and then provide a formal definition of the degenerate case.
\begin{widetext}
\begin{enumerate}[label=(\roman*)]
    \item The differentiable structure carried by $(\mathcal{M}, g)$ is not sufficiently regular.\\[3pt]
    The application of the CPA demands that $(\mathcal{M}, g)$ carries a $C^{r}$-structure for some $r\in \mathbb{N}: r\geq 2$, but the parametrization fields in the atlas only generate a $C^{k}$-structure with $k\in\mathbb{N}: k \leq r-1$. Define the defect set as
    \begin{align*}
        \forall\,r\in\mathbb{N}&: r\geq 2:\\
        &\hspace{0.5cm} \Xi_{C^{r}}(\mathcal{M}) \coloneqq \left\{ \textbf{x} \in \mathcal{X}_{\textbf{x}_{0}}(Q_{\textbf{x}_{0}}) \subset  \mathcal{M}:\ \mathcal{X}_{\textbf{x}_{0}}\in C^{k}(\mathbb{R}^{m}, \mathbb{R}^{n})\ \vert\ k\in\mathbb{N}: k \leq r-1 \ \text{maximally} \right\}\ .
    \end{align*}
    \item $\mathcal{M}$ does not carry a differentiable structure.\\[3pt]
    This is the limiting case of (i) where the parametrization fields in the atlas are not differentiable so that they yield a $C^{0}$-structure only and fail to generate a differentiable manifold. Consequently, tangent spaces and thus Riemannian metrics as well as normal spaces cannot be defined. Define the defect set as
    \begin{align*}
       \Xi_{C^{1}}(\mathcal{M}) \coloneqq \left\{ \textbf{x} \in \mathcal{X}_{\textbf{x}_{0}}(Q_{\textbf{x}_{0}}) \subset  \mathcal{M}:\ \mathcal{X}_{\textbf{x}_{0}}\in C^{0}(\mathbb{R}^{m}, \mathbb{R}^{n})\ \text{maximally} \right\}\ .
    \end{align*}
    \item $\hat{\mathcal{X}}_{\textbf{x}_{0}}^{\ast}\iota$ is an injective immersion, but not an embedding.\\[3pt]
    The subspace topology of Euclidean space $\mathcal{T}_{\mathbb{R}^{n}\downarrow \mathcal{M}}$ does not coincide with the image topology $\mathcal{X}_{\textbf{x}_{0}}(\mathcal{T}_{\mathbb{R}^{m}})$ of $\mathcal{M}$, locally, so that $\mathcal{M}$ can accumulate onto itself. Then, a normal neighborhood of $\mathcal{M}$ cannot be defined. Define the defect set as
    \begin{align*}
        \Xi_{\mathrm{emb.}}(\mathcal{M}) \coloneqq \left\{ \textbf{x} \in \mathcal{X}_{\textbf{x}_{0}}(Q_{\textbf{x}_{0}}) \subset  \mathcal{M}:\ \mathcal{X}_{\textbf{x}_{0}}\ \text{is a local homeomorphism w.\,r.\,t.}\ \mathcal{X}_{\textbf{x}_{0}}(\mathcal{T}_{\mathbb{R}^{m}}) = \mathcal{T}_{\mathcal{M}}\ \text{but not w.\,r.\,t.}\ \mathcal{T}_{\mathbb{R}^{n}\downarrow \mathcal{M}} \right\}\ .
    \end{align*}
    \item $\hat{\mathcal{X}}_{\textbf{x}_{0}}^{\ast}\iota$ is an immersion, but not an injective immersion.\\[3pt]
    This is the first of two limiting cases of (iii). $\hat{\mathcal{X}}_{\textbf{x}_{0}}^{\ast}\iota$ is not a homeomorphism anymore, locally, due to the loss of injectivity. Then, $\mathcal{M}$ can intersect itself, the tangent and normal spaces are not unique and a normal neighborhood of $\mathcal{M}$ cannot be defined. Define the defect set as
    \begin{align*}
        \Xi_{\mathrm{inj.\,imm.}}(\mathcal{M}) \coloneqq \left\{ \textbf{x} \in \mathcal{X}_{\textbf{x}_{0}}(Q_{\textbf{x}_{0}}) \subset  \mathcal{M}:\ \exists\, \textbf{q}_{1},\textbf{q}_{2} \in Q_{\textbf{x}_{0}}: \mathcal{X}_{\textbf{x}_{0}}(\textbf{q}_{1}) = \mathcal{X}_{\textbf{x}_{0}}(\textbf{q}_{2}) \right\}\ .
    \end{align*}
    \item $\hat{\mathcal{X}}_{\textbf{x}_{0}}^{\ast}\iota$ is not an immersion; the tangent space of $\mathcal{M}$ locally collapses so that $g$ degenerates.\\[3pt]
    This is the second of two limiting cases of (iii). The parametrization is injective and differentiable, but a canonical coordinate vector field in the frame vanishes so that the local tangent space collapses by a fall of dimension. This is accompanied by a zero in the tensor field $g$, so it degenerates and fails to be a (Riemannian) metric. Additionally, the tangent mapping of $\hat{\mathcal{X}}_{\textbf{x}_{0}}^{\ast}\iota$ is not injective so that $\iota$ fails to be an immersion, and the local normal space cannot be uniquely defined. Define the defect set as
    \begin{align*}
        \Xi_{\mathrm{imm.}}(\mathcal{M}) \coloneqq \left\{ \textbf{x} \in \mathcal{X}_{\textbf{x}_{0}}(Q_{\textbf{x}_{0}}) \subset  \mathcal{M}:\ \mathrm{rank}(\mathrm{d}\mathcal{X}_{\textbf{x}_{0}}(\textbf{q})) \leq m-1\,\Leftrightarrow\,\mathrm{nullity}(\mathrm{d}\mathcal{X}_{\textbf{x}_{0}}(\textbf{q})) \geq 1 \right\}\ .
    \end{align*}
\end{enumerate}
\end{widetext}
\begin{remark}
    The above listed defects are fundamental in the sense that they cannot be removed by choosing a different parametrization.
    \par    
    -Defect class (i) contains \enquote{mild irregularities} in the sense that the here appearing spaces $(\mathcal{M}, g)$ are still isometrically embedded Riemanninan submanifolds carrying (at least) a $C^{1}$-structure. Especially, the normal spaces are defined everywhere and the fundamental idea of the CPA, which is arranging a confinement potential in a neighborhood along the normal degrees of freedom, is not perturbed. This is in variance with all the other classes. However, the regularity of the structure is not strong enough to guarantee well-behaved (continuous) derived quantities. A special case of this defect class are manifolds where these quantities have the value of a regular distribution, which then can be captured by weak differentiation and the theory of weak solutions of the CPA-Schrödinger equation. Examples in the context of curves contain jumping potentials like the one appearing in the junction of a straight line  and a circle with $\kappa \propto H(s)$ \citep{Sprung1992} or the more general example of curves possessing a curvature such that $\kappa \in L^{1}(J)$ and $\kappa^{2}\in H^{-1}(J)$ hold, that we treat in this work.
    \par
    -Defect class (ii) contains irregularities where the tangent cannot be (uniquely) defined such as edges, vertices, or conical apices. Thus, it is a subclass of (v) in the sense that the tangent map is not injective because it does not even exist. Compared to (i), these cases are \enquote{more severe}. Derived quantities such as curvature often appear as a singular distributions like in the one-dimensional case of a vertex of a curve \citep{Pitelli2024}, where $\kappa(s) \propto \delta(s - s_{0})$ so that the geometric potential contains a product of distributions, $V_{\mathrm{geo}} \propto \delta^{2}(s-s_{0})$, which is ill-defined. In the two-dimensional case, however, conical spaces could be treated in \citep{Filgueiras2008, Filgueiras2012, Poux2014, Kimouche2022}, exploiting the rotational symmetry. Note that $\mathcal{M}$ still obtains a topological structure in this case, anyway.
    \par
    -Defect class (iii) contains irregularities where neighborhoods of the parameter $\textbf{q}\in Q_{\textbf{x}_{0}}$ are mapped to neighborhoods of $\iota(\mathcal{X}_{\textbf{x}_{0}}(\textbf{q})) \in \mathcal{M}$ that are open w.\,r.\,t. the image topology, but not the subspace topology, because the (corestricted) parametrization is not a topological homeomorphism. This allows $\mathcal{M}$ to come arbitrarily close to itself, locally accumulating onto itself.
    \par
    -Defect class (iv) contains irregularites where $\mathcal{M}$ intersects itself because $\iota(\mathcal{X}_{\textbf{x}_{0}}(\textbf{q}))$ is not injective and thus no homeomorphism anymore (as it is not invertible so that topologies cannot be compared). In such points, the tangent and normal spaces are not uniquely defined, as several independent ones exist. $\mathcal{M}$ loses the structure of a submanifold.
    \par
    -Defect class (v) contains irregularities where the parametrization is differentiable but some coordinate frame tangent fields vanish locally, leading the tangent space to collapse. Thus, $\hat{\mathcal{X}}_{\textbf{x}_{0}}^{\ast}\iota$ is locally no immersion indicating a local reduction of dimension. This includes tips and transitioning points between different connected components of $\mathcal{M}$ with invariant dimensions. $\mathcal{M}$ loses the structure of a submanifold. Note that in singularity theory, there is the following classification of points for $\mathcal{X}_{\textbf{x}_{0}}\in C^{1}(\mathbb{R}^{m},\mathbb{R}^{n})$:
    \begin{align*}
        \mathrm{rank}\left( \mathrm{d}\mathcal{X}_{\textbf{x}_{0}}\right) &= \mathrm{min}\left( m, n \right) &&: &&\text{regular}\ , \\
        \mathrm{rank}\left(\mathrm{d}\mathcal{X}_{\textbf{x}_{0}} \right) &< \mathrm{min}\left( m, n \right) &&: &&\text{singular}\ , \\
        \mathrm{rank}\left(\mathrm{d}\mathcal{X}_{\textbf{x}_{0}} \right) &< n &&: &&\text{critical}\ .
    \end{align*}
    In the context of the CPA it is $m < n$ so that all points of $\mathcal{M}$ are critical. This degeneracy class then precisely refers to the singular points of this classification. Note further that if there is an immersion, it will automatically be isometric, so that  there is no additional failure of isometry. Admitting these zeros, the tensor field $g$ degenerates and fails to be a (Riemannian) metric. However, it can still be regarded as an almost Riemannian metric (a special class of sub-Riemannian metrics), so that concepts of this and related research fields should naturally apply to these spaces and the Schrödinger equation can be interpreted accordingly \citep{Gromov2007, Boscain2013, Beschastnyi2021}. An example of this class, the Beltrami surface, was considered in \citep{Furtado2023} but this work did not address the peculiarities of the degenerate manifold.
\end{remark}
\par
We thus define the degenerate submanifold in the viewpoint of the CPA as follows.
\begin{widetext}
\par
\begin{definition}[Degenerate submanifold of Euclidean space (CPA)]
\label{def:degenerate_submanifold}
   Let $\mathcal{M}\subset \mathbb{R}^{n}$ be a parametrized object such that for all points $\textbf{x}_{0} \in \mathcal{M}$ there are a neighborhood $U_{\textbf{x}_{0}} \subset \mathbb{R}^{n}$ and a mapping $\mathcal{X}_{\textbf{x}_{0}}: \mathbb{R}^{m} \supset Q_{\textbf{x}_{0}} \to U_{\textbf{x}_{0}}\cap \mathcal{M} \subset \mathbb{R}^{n}$, $\textbf{q} \mapsto \mathcal{X}_{\textbf{x}_{0}}(\textbf{q}) \coloneqq \textbf{x}$, such that $\mathcal{X}_{\textbf{x}_{0}}\in C^{0}(\mathbb{R}^{m}, \mathbb{R}^{n})$.\\[5pt]
   For fixed $r,m\in \mathbb{N}$ the \textit{regular part} of $\mathcal{M}$ is defined as
   \begin{align*}
       \mathrm{Reg}_{r,m}(\mathcal{M}) \coloneqq &\left\{  \textbf{x}\in \mathcal{M}:\ \text{The maximal atlas}\ \left( U_{i}\cap \mathcal{M}, \mathcal{X}_{i} \right)_{i\in \mathcal{I}}\ \text{suitably covering}\  \mathcal{M}\ \text{locally produces} \right.\\
       &\hspace{1.7cm}\left. \text{the structure of a}\ C^{r}\text{-}m\text{-manifold}. \right\}\ .
   \end{align*}
   Let the defect sets be defined as above. The subset 
   \begin{align*}
       \Xi_{r,m}(\mathcal{M}) \coloneqq \mathcal{M} \setminus \mathrm{Reg}_{r,m}(\mathcal{M}) = \Xi_{C^{r}}(\mathcal{M}) \cup \Xi_{C^{1}}(\mathcal{M}) \cup \Xi_{\mathrm{emb.}}(\mathcal{M}) \cup \Xi_{\mathrm{inj.\,imm.}}(\mathcal{M}) \cup \Xi_{\mathrm{imm.}}(\mathcal{M})
   \end{align*}
   is called the \textit{degenerate locus} of $\mathcal{M}$ if it is closed and has Hausdorff measure zero, so that $\mathrm{dim}(\Xi_{r,m}(\mathcal{M})) \leq m-1$. Any space for which $\Xi_{r,m}(\mathcal{M}) \neq \emptyset$ is called a \textit{degenerate submanifold} (in the viewpoint of the CPA).
\end{definition}
\end{widetext}
\par
We remark, that the CPA needs to be extended to cope with the different degenerate submanifold classes. Except for class (i), this starts from a weakening of the CPA formalism in the sense that the usual construction of the confining potential along the normal degrees of freedom is carried out only on the (connected components of the) regular part $\mathrm{Reg}_{r,m}(\mathcal{M})$. This means it is constructed almost everywhere, i.e.~everywhere except on the degenerate locus $\Xi_{r,m}(\mathcal{M}) \subset \mathcal{M}$. We deem  this definition  useful for the description of the degenerate submanifold $\mathcal{M}$ as a stratifold \citep{Kreck2010} and the application of more abstract concepts of differential geometry, which we think will contribute to the construction of suitable CPA extensions.
\par
The above classification is not meant to be exhaustive as it only focuses on the context of the Schrödinger equation solved on configuration spaces inside Euclidean space. It is possible to consider subsets of other Riemannian manifolds with already nontrivial geometry and topology, too, where other geometric compatibility requirements possibly occur. Especially, relativistic treatments focus on different partial differential equations, Minkowski spacetime and pseudo-Riemannian geometry \citep{Brandt2017}. Consideration of spinors requires spin manifolds and spin geometry \citep{Liu2011, Brandt2015, Wang2017, Lawson1989}. So, the more mathematical demands on the configuration space the physics generally requires, the richer becomes the portfolio of irregularities. Furthermore, considering the quantum equation of motion coupled to a gauge field like the electromagnetic field introduces additional classifications of the spaces, like in the context of the Aharonov-Bohm effect, that do not interfere with the above scheme.
%\end{widetext}
%
%%%%%%%%%%%%%%%%%%%%%%%%%%%%%%%%%%%%%%%%%%%%%%%%%%%%%%%%%%%%%%%%%%%%%%%%%%%%%%%%%%%%%%%%%%%%%%%%%%%%%%%%%%%%%%%%%%%%%%%%%%%%%%%%%%%
%%%%%%%%%%%%%%%%%%%%%%%%%%%%%%%%%%%%%%%%%%%%%%%%%%%%%%%%%%%%%%%%%%%%%%%%%%%%%%%%%%%%%%%%%%%%%%%%%%%%%%%%%%%%%%%%%%%%%%%%%%%%%%%%%%%
% Section III
\section{The Regularization of Degenerate Plane Curves}
\label{sec:Model_Summary}
In the remainder of this article we focus on the case of a degenerate plane curve $\mathcal{M} \subset \mathbb{R}^{2}$ whose degenerate locus $\Xi_{2,1}(\mathcal{M}) \subset \mathcal{M}$ is then an at most countable set of discrete points (i.\,e. a zero-dimensional submanifold of $\mathcal{M}$). The regular part $\mathrm{Reg}_{2,1}(\mathcal{M}) = \mathcal{M}\setminus\Xi_{2,1}(\mathcal{M})$ is then a piecewise isometrically embedded Riemannian $C^{2}$-$1$-submanifold of Euclidean space, so that the CPA can be carried out regularly almost everywhere. After stating the result of the CPA for the regular case, we construct an extension of the CPA to solve a particular subclass of the above mentioned degenerate case. The mathematical backgrounds of the construction are rooted in \citep{A.M.Savchuk1999, Savchuk2003}. Explanations of these, definitions and propositions together with some mathematical proofs are collected in the appendix.
\subsection{The Regular Model for Plane Curves}
In the regular case, we consider a one-dimensional Riemannian submanifold $\left( \mathcal{M}, g \right)$, also called a curve, that is given as the image of the interval $Q = (a,b) \eqqcolon I \subset \mathbb{R}^{1}$ under a parametrization $\mathcal{X}_{\mathcal{M}}\in C^{2}(I, \mathbb{R}^{2})$. Then, the canonical coordinate tangent vector field $\textbf{t}$ is given by the Jacobi matrix $J_{\mathcal{M}} = \mathrm{d}\mathcal{X}_{\mathcal{M}} \doteq \textbf{t}$. As an embedded submanifold, the curve $\mathcal{M}$ is then equipped with a $C^{2}$-structure and the component of the metric tensor field fulfills $g_{11} = \Vert \textbf{t}\Vert_{2}^{2} \in C^{1}(I)$. The Levi-Civita connection is assumed to act on the tangent bundle $T\mathcal{M}$. There exists a unique parametrization of the curve with respect to the arc length
\begin{align}
    s(q) = s_{0} + \int\limits_{q_{0}}^{q} \Vert \textbf{t}(\tau) \Vert_{2}\ \mathrm{d}\tau\ ,\quad q_{0}\in \overline{I}\ .
\end{align}
Using this formula, any general parametrization can be transformed into one with respect to the arc length, with arc length interval $J\subset \mathbb{R}^{1}$. For simplicity, we will assume $q = s$ in the following. By elementary relations of differential geometry, which can be found in e.\,g. \citep{Carmo2016, Jost2017}, we can construct the Frenet-Serret frame field consisting of the tangent $\textbf{t}$ and normal $\textbf{n}$ vector fields fulfilling
\begin{align}
   \frac{\mathrm{d}}{\mathrm{d} s} \begin{pmatrix}
        \textbf{t}(s) \\
        \textbf{n}(s)
    \end{pmatrix} = \begin{pmatrix}
0 & \kappa(s) \\
-\kappa(s) & 0 
    \end{pmatrix} \cdot \begin{pmatrix}
        \textbf{t}(s) \\
        \textbf{n}(s)
    \end{pmatrix}\ ,
\label{eq:Frenet-Serret}
\end{align}
where $\kappa$ is the curvature of the curve. Note that in arc length parametrization these two vectors form an orthonormal frame field. By the fundamental theorem of the local theory of curves, it is known that for any scalar functions $\kappa\in C^{0}(I)$, with $\kappa(s) > 0\ \forall\, s\in J$, there exists a unique (up to rigid motion) parametrized curve possessing these functions as curvature and torsion, with $s$ serving as the arc length parameter.
\par
For any regular curve $\mathcal{M} \subset \mathbb{R}^{n}$ the Frenet-Serret frame field can be parametrized by an angle obtained as the integral curvature,
\begin{align}
    \gamma(s) = \gamma_{0} + \int\limits_{s_{0}}^{s} \kappa(s')\ \mathrm{d}s'\ ,
\end{align}
and the turn of the curve is defined as the total integral curvature \citep{Alexandrov1989}
\begin{align}
    \gamma(\mathcal{M}) \coloneqq \int\limits_{J} \kappa(s')\ \mathrm{d}s'\ .
\label{eq:turn_of_curve}
\end{align}
\par
Using the normal vector field, we can parametrize a sufficiently small normal neighborhood of the curve by
\begin{align}
    \mathcal{Y}(s, y) = \mathcal{X}_{\mathcal{M}}(s) + y\,\textbf{n}(s)\ .
\end{align}
Executing the CPA program for our case yields the onedimensional Schrödinger equation
\begin{align}
    \ii\hbar\,\partial_{t}\chi_{\mathrm{t}} = -\frac{\hbar^{2}}{2m}\,\left[ \partial_{s}^{2} + \frac{\kappa(s)^{2}}{4}\,\hat{I} \right]\chi_{\mathrm{t}}\ ,
\label{eq:TangentialGleichung}
\end{align}
which for a field-free particle contains a correction term in dependence of the curvature of the curve that is often denoted as the geometric potential,
\begin{align}
    V_{\mathrm{geo}}(s) \coloneqq -\frac{\hbar^{2}}{8m}\kappa^{2}(s)\ .
\end{align}
\subsection{Modeling Degenerate Curves Using Sturm-Liouville Theory}
Now, we construct an extension of the CPA formalism to incorporate a particular subclass of degenerate plane curves whose degenerate locus is associated with a curvature function $\kappa\in L^{1}(J)$ \footnote{In our notation it is $\Xi_{2,1}(\mathcal{M}) = \Xi_{C^{2}}(\mathcal{M})$ and our problem belongs to degeneracy class (i).}. In the CPA, such curvatures are generally interpreted as discontinuous or singular geometric potentials by the effective Schrödinger equation \eqref{eq:TangentialGleichung}. We will particularly focus on singular potentials and show, that if additionally $\kappa^{2}\in H^{-1}(J)$ holds, the corresponding Hamilton operator possesses a well-defined spectrum using a regularization ansatz and operator theoretic methods. Precisely, the degenerate curve is approximated by a family of regular ones, $(\mathcal{X}_{\mathcal{M}_{\varepsilon}})_{\varepsilon \in \mathbb{R}^{+}}$, for which the CPA works according to the previous section. This naturally translates into studying a convergent series of Hamilton operators. We then show that both the eigenvalues and the eigenfunctions of the regularized Schrödinger equations converge in a proper sense so that their limit provides a reasonable notion of the solution to the original degenerate problem.
\par
We consider a curve with trace $\mathcal{M}$ containing an isolated irregularity in the point $\mathbf{x}_{0}\in \Xi_{2,1}(\mathcal{M})$. Let $I = (a,b)$ be an open and bounded interval in $\R$, with $\mathcal{X}^{-1}_{\mathcal{M}}(\textbf{x}_{0}) =q_{0}\in I$. Thus, the parametrization function
\begin{equation}
\label{Eq:singularCurve}
    \function{\mathcal{X}_{\mathcal{M}}}{I}{\R^2},\ q\mapsto \mathcal{X}_{\mathcal{M}}(q)
\end{equation}
is regular on $I \setminus \lbrace q_{0} \rbrace$ and irregular in $q_{0}$. Note that the point of degeneracy can always be set as $q_{0} = 0$, so that $\textbf{x}_{0} = \mathcal{X}_{\mathcal{M}}(0)$, since the image of the curve is invariant under diffeomorphisms and corresponding redefinitions of the domain, e.\,g. translations. We observe that $I \setminus \lbrace 0 \rbrace$ is open so the arc length is well-defined there. Thus, let $J \setminus \lbrace 0 \rbrace$ be the corresponding open interval for the parametrization by arc length. The CPA construction is well-defined on $J\setminus \{ 0 \}$ and delivers an effective extrinsic Schrödinger equation in the naive pointwise sense. Then, after the separation of time in \eqref{eq:TangentialGleichung} it reads
\begin{equation}
     \label{Eq:SingularTangent}
    -\frac{\hbar^{2}}{2m}\left[ \partial_s^2+\frac{\kappa(s)^2}{4}\,\hat{I} \right]\chi_{\mathrm{t}}(s) = E\,\chi_{\mathrm{t}}(s)\ ,\; s\in J \setminus \lbrace 0 \rbrace\ ,
\end{equation}
where the curvature function is assumed as $\kappa\in L^{1}(J)$ and diverges as
\begin{align}
    \lim_{s\to 0} \vert \kappa(s)\vert = +\infty\ .
\end{align}
Note that the assumption of considering an $L^{1}(J)$-function restricts the strength of the singularity. To be precise, the problem we face is to compute the spectrum of a Schrödinger equation with a Hamilton operator containing a potential-like term which generates a distribution of first order, meaning there exists $U_{\mathrm{geo}}\in L^{2}(J)$ such that $U_{\mathrm{geo}}' = V_{\mathrm{geo}}$ in the distributional sense, i.\,e. $\langle U_{\mathrm{geo}}, h'\rangle_{L^{2}(J)} = - \langle V_{\mathrm{geo}}, h\rangle_{L^{2}(J)}$ for all test functions $h\in C_{c}^{\infty}(J)$. This is equivalent to stating $V_{\mathrm{geo}} \in H^{-1}(J)$. Note that no other assumptions are made for $V_{\mathrm{geo}}$, so it may be an actual generalized function, in general. See appendix sec.\,\ref{sec:Appendix_Convergence_of_Spectra} for a more detailed discussion of the mathematical background rooted in \citep{A.M.Savchuk1999, Savchuk2003}.
\par
To deal with the anomaly in the curvature, we now suggest the following regularization ansatz. We define a family of curves $(\mathcal{X}_{\mathcal{M}_{\varepsilon}})_{\varepsilon \in \mathbb{R}^{+}}$ that shall approximate the degenerate curve $\mathcal{X}_{\mathcal{M}}$ in the limit $\varepsilon \to 0$. This approximating family necessarily has to converge with two distinct qualities so that our treatment works:
\begin{enumerate}[label = (\roman*)]
    \item The parametrizations and all derived geometric quantities have to converge in the pointwise sense for all $s\in J\setminus \mathcal{X}_{\mathcal{M}}^{-1}[\Xi_{2,1}(\mathcal{M})]$, i.\,e.
    \begin{align}
    \begin{aligned}
        \lim_{\varepsilon\to 0}\left\vert \mathcal{X}_{\mathcal{M}_{\varepsilon}}(s) - \mathcal{X}_{\mathcal{M}}(s)\right\vert &= 0\ ,\\
        \lim_{\varepsilon\to 0}\left\vert\kappa_{\varepsilon}(s) - \kappa(s)\right\vert &= 0\ .
    \end{aligned}
    \end{align}
    \item The primitives of the curvature-induced potential functions have to converge in the $L^{2}(J)$-sense, i.\,e.
    \begin{align}
        \lim_{\varepsilon \to 0} \left\Vert {U_{\mathrm{geo}}}_{\varepsilon} - U_{\mathrm{geo}} \right\Vert_{L^{2}(J)} = 0\ .
    \end{align}
\end{enumerate}
The first statement means the convergence of curves as geometric objects. Obviously, this is a necessary condition for our solution scheme to apply. We call any regularization fulfilling (i) a \textit{pre-admissible regularization}. The second statement is necessary to secure the correct convergence of the series of Hamilton operators associated with the curves when we abstract the discussion from geometric objects to partial differential equations. We call any regularization fulfilling both (i) and (ii) an \textit{admissible regularization}. Note that the second statement cannot be dropped, and see appendix sec.\,\ref{sec:Appendix_admissible_regs} for further discussions.
\par
A sufficient condition for constructing a pre-admissible regularization is to provide a family of functions $\lbrace \kappa_\varepsilon\rbrace_{\varepsilon\in\R^{+}} \subset C^{0}(J)$ so that $\kappa_{\varepsilon}$ is bounded on $J$ for every $\varepsilon\in \mathbb{R}^{+}$ and that the family converges in $L^1(J)$,
\begin{equation}
    \lim\limits_{\varepsilon\rightarrow0}\norm{\kappa_\varepsilon - \kappa}_{L^{1}(J)} = 0\ .
\end{equation}
By the Frenet-Serret equations \eqref{eq:Frenet-Serret}, every $\kappa_{\varepsilon}$ defines a unique (up to rigid motion) regular parametrized curve in arc length parametrization, for which it serves as the curvature function. The $L^{1}(J)$-convergence of the curvatures implies pointwise convergence of the corresponding curves, their traces in the sense of geometric objects, and derived geometric quantities (see appendix sec.\,\ref{sec:Appendix_Convergence_of_Curves}).
\par
Due to the regularity of all curvature functions $\kappa_{\varepsilon}$ we consider the corresponding Schrödinger equations
\begin{align}
    \label{Eq:DesingularizedTangent}
    -\frac{\hbar^{2}}{2m}\left[ \partial_s^2+\frac{\kappa_\varepsilon(s)^2}{4}\,\hat{I} \right]{\chi_{\mathrm{t}}}_{\varepsilon}(s) = E_{\varepsilon}\,{\chi_{\mathrm{t}}}_{\varepsilon}(s)\ ,\ s \in J\ .
\end{align}
It is now the standard situation encountered in quantum mechanics that, for all $\varepsilon\in \mathbb{R}^{+}$, the self-adjoint realizations of their Hamilton operators, $(\hat{H}_{\varepsilon}, \mathcal{D}(\hat{H}_{\varepsilon}))$, are given by the domains
\begin{align}
    \mathcal{D}(\hat{H}_{\varepsilon}) = \left\{ \varphi\in H^{2}(J): \hat{U}_{1}(\varphi) = \hat{U}_{2}(\varphi) = 0 \right\} \subset H^{2}(J)
\end{align}
that are specified by some boundary linear forms $\hat{U}_{1}$ and $\hat{U}_{2}$ (see appendix sec.\,\ref{sec:Appendix_Convergence_of_Spectra} for details). This construction contains the familiar Dirichlet, Neumann, Robin, and periodic boundary conditions. Consequently, for every fixed $\varepsilon \in \mathbb{R}^{+}$ the spectrum is discrete and there exists a well-defined orthonormal basis of the Hilbert space $(L^{2}(J), \langle \cdot, \cdot\rangle_{L^{2}(J)})$ composed of nontrivial solutions $\left\{ ({\chi_{\mathrm{t}}^{n}}_{\varepsilon}, E_{\varepsilon}^{n})\right\}_{n \in \mathbb{N}}$ (counted without multiplicity) to this equation. Moreover, each spectrum is semi-bounded from below, such that for fixed $\varepsilon \in \mathbb{R}^{+}$
\begin{align}
    \inf \left\{ E_{\varepsilon}^{n} \right\}_{n \in \mathbb{N}} >  -\frac{\hbar^2}{8m}\,\sup_{s\in J}\kappa_\varepsilon^{2}(s)  > -\infty\ ,
\end{align}
since the geometric potential ${V_{\mathrm{geo}}}_{\varepsilon}$ is bounded.
\par
Besides, now considering the singular case, we can reformulate the Schrödinger equation \eqref{Eq:SingularTangent} by introducing the (adapted) quasi-derivative
\begin{align}
    \chi_{\mathrm{t}}^{[1]} \coloneqq \partial_{s}\chi_{\mathrm{t}} - u_{\mathrm{geo}}\,\chi_{\mathrm{t}}\ ,\quad u_{\mathrm{geo}} \coloneqq \frac{2m}{\hbar^{2}}\,U_{\mathrm{geo}}
\end{align}
as
\begin{align}
    \frac{\hbar^{2}}{2m}\,\hat{L}\,\chi_{\mathrm{t}} = E\,\chi_{\mathrm{t}}\ ,
\end{align}
with the linear operator action
\begin{align}
    \hat{L} = \frac{2m}{\hbar^{2}}\,\hat{H} = -\left( \chi_{\mathrm{t}}^{[1]}\right)' - u_{\mathrm{geo}}\,\chi_{\mathrm{t}}^{[1]} - u_{\mathrm{geo}}^{2}\,\chi_{\mathrm{t}}\ .
\end{align}
$\hat{L}$ does not contain a singular potential anymore, and using the adapted quasi-derivative it was shown in \citep{A.M.Savchuk1999, Savchuk2003} that many results of regular Sturm-Liouville problems also hold in this singular case, i.\,e. the special case where the divergent term fulfills $V_{\mathrm{geo}}\in H^{-1}(J)$. Especially, it was shown that any self-adjoint realization of the singular Hamilton operator, $(\hat{H}, \mathcal{D}(\hat{H}))$, is given by the domain
\begin{align}
\begin{aligned}
    \mathcal{D}(\hat{H}) = &\left\{ \varphi\in L^{2}(J): \right.\\
    & \hspace{0.5cm}\left. \varphi, \varphi^{[1]} \in W^{1,1}(J)\ ,\ \hat{H}\varphi\in L^{2}(J)\ ,\right.\\
    &\hspace{0.5cm} \left. \hat{\tilde{U}}_{1}[\varphi] = \hat{\tilde{U}}_{2}[\varphi] = 0 \right\} \subset H^{1}(J)\ ,
\end{aligned}
\end{align}
where $\hat{\tilde{U}}_{1}[\varphi]$ and $\hat{\tilde{U}}_{2}[\varphi]$ denote adapted boundary linear forms similar to those above. Note that the solutions of the singular equation thus have less structure, in general, as they do not necessarily possess a second (weak) derivative. The eigenstates solving the Schrödinger equation, $(\chi_{\mathrm{t}}^{n}, E^{n})$, counted again without multiplicity, also form an orthonormal basis of the Hilbert space $(L^{2}(J), \langle \cdot, \cdot \rangle_{L^{2}(J)})$.
\par
Then, according to \citep{A.M.Savchuk1999, Savchuk2003}, if
\begin{align}
    \norm{ {U_{\mathrm{geo}}}_{\varepsilon} - U_{\mathrm{geo}} }_{L^{2}(J)} \to 0\ ,\quad (\varepsilon \to 0)\ ,
\end{align}
i.\,e. if $(\mathcal{X}_{\mathcal{M}_{\varepsilon}})_{\varepsilon \in \mathbb{R}^{+}}$ is an admissible regularization, the spectra $\{\sigma(\hat{H}_{\varepsilon})\}_{\varepsilon\in \mathbb{R}^{+}}$ converge towards the spectrum of the degenerate case, $\sigma(\hat{H})$, which is real, discrete, and semi-bounded from below, and for all $n\in\mathbb{N}$ the eigenvalues fulfill
\begin{align}
    \vert E_{\varepsilon}^{l_{n}} - E^{n}\vert_{\mathbb{R}} \to 0\ ,\quad (\varepsilon \to 0)\ ,
\end{align}
where the variables $l_{n}$ identify those eigenstates that become degenerate in the limit, which is possible if there occurs a breaking of symmetry during the regularization (for one-dimensional Sturm-Liouville equations we can produce a two-fold degeneracy at the highest). This provides the convergence of eigenvalues and the limit gives the solution to the degenerate case. Furthermore, as we show in appendix sec.\,\ref{sec:Appendix_Convergence_of_Curves}, the associated wave function of the $n$-th eigenstate of the limiting operator can also be obtained by observing the convergence of ${\chi_{\mathrm{t}}^{l_{n}}}_{\varepsilon}$ for $\varepsilon \to 0$.
\par
Thus, the particle confined to a degenerate curve (with finite length) possesses a well-defined set of quantum states, and these can be computed as limits using admissible regularizations. Analogous statements can be made for infinite curves with $J=\R$ but then the spectra will contain a continuous part. However, our proofs do not treat this case.
%
%%%%%%%%%%%%%%%%%%%%%%%%%%%%%%%%%%%%%%%%%%%%%%%%%%%%%%%%%%%%%%%%%%%%%%%%%%%%%%%%%%%%%%%%%%%%%%%%%%%%%%%%%%%%%%%%%%%%%%%%%%%%%%%%%%%
%%%%%%%%%%%%%%%%%%%%%%%%%%%%%%%%%%%%%%%%%%%%%%%%%%%%%%%%%%%%%%%%%%%%%%%%%%%%%%%%%%%%%%%%%%%%%%%%%%%%%%%%%%%%%%%%%%%%%%%%%%%%%%%%%%%
% Section IV
\section{Example}
\label{sec:Example}
For illustration  we consider a plane degenerate curve which is embedded in two-dimensional space. Precisely, the singular curvature function $\kappa \in L^{1}(J)$ shall be \footnote{$\kappa \in L^{1}(J)$ for $\alpha \in (0,1)$ but the requirement $\kappa^{2}\in H^{-1}(J)$ puts additional restrictions.}
\begin{align}
    \kappa(s) = K\,(s^{2})^{-\frac{\alpha}{2}} = K\,\vert s\vert ^{-\alpha}\ ,\quad \alpha\in \left(0,\, \frac{3}{4}\right)\ .
\end{align}
This example belongs to the degeneracy class (i), i.\,e. it is an isometrically embedded $C^{1}$-curve whose curvature diverges in a single point, that we put as the origin of our coordinate system, $\Xi_{2,1}(\mathcal{M}) = \{ \textbf{0} \}$. Note that we chose a convention where we measure the arc length from this point on, but along both directions individually, so that the arc length parameter $s$ is allowed to obtain negative values. The parameter $K$ is related to the turn of the curve \eqref{eq:turn_of_curve} via the relation $\gamma(\mathcal{M}) = \pi - \theta$ as 
\begin{align}
    K = \frac{(1- \alpha)\,(\pi - \theta)}{\vert a \vert^{1-\alpha} + \vert b\vert^{1 - \alpha}}\ ,
\end{align}
where $\theta$ is the opening angle between the tangents on the endpoints.
\par
Specifically, we consider a particle with mass \linebreak$m = m_{e}$ on the degenerate curve with arc length interval \linebreak$J = (-5,5)\,\mathrm{nm}$ and geometric parameters $\alpha = \frac{1}{2}$ and $\theta = \frac{\pi}{8}$. Then, the geometric potential $V_{\mathrm{geo}}\in H^{-1}(J)$ and its primitive $U_{\mathrm{geo}} \in L^{2}(J)$ fulfills with $C\in\mathbb{R}$
\begin{align}
\begin{aligned}
    V_{\mathrm{geo}} &= -\frac{\hbar^{2}\,K^{2}}{8m}\,\vert s\vert^{-1}\ ,\\
    U_{\mathrm{geo}} &= - \frac{h^{2}\,K^{2}}{8m}\,\mathrm{sgn}(s)\,\mathrm{ln}(\vert s\vert) + C\ .
\end{aligned}
\end{align}
This means, the limiting case $\alpha = \frac{1}{2}$  defines a curve whose geometric potential resembles an attractive one-dimensional Coulomb potential centered at the degenerate locus. This has an amplitude strength that increases when the curve length or the particle mass decrease, or the turn increases (the opening angle decreases). Note that $V_{\mathrm{geo}} \notin L^{1}(J)$ as it contains a non-integrable singularity at $s=0$.
\par
As a regularization, we apply the curves $\{ \mathcal{X}_{\mathcal{M}_{\varepsilon}} \}_{\varepsilon\in\mathbb{R}^{+}}$ with curvatures $\{\kappa_{\varepsilon}\}_{\varepsilon \in \mathbb{R}^{+}} \subset C^{0}(J)$, given as
\begin{align}
    \kappa_{\varepsilon} = K\,(\vert s\vert + \varepsilon)^{-\alpha}\ ,
\end{align}
which for $\alpha = \frac{1}{2}$ implies
\begin{align}
    \begin{aligned}
    {V_{\mathrm{geo}}}_{\varepsilon} &= -\frac{\hbar^{2}\,K^{2}}{8m}\,\left( \vert s\vert + \varepsilon \right)^{-1}\ ,\\
    {U_{\mathrm{geo}}}_{\varepsilon} &= - \frac{h^{2}\,K^{2}}{8m}\,\mathrm{sgn}(s)\,\mathrm{ln}\left(\vert s\vert + \varepsilon \right) + C_{\varepsilon}\ .
\end{aligned}
\end{align}
It is easily verified that this is an admissible regularization, requiring $C_{\varepsilon} \to C$, $(\varepsilon\to 0)$. Some members of this family of regularized curves and the limiting degenerate curve are depicted in figure \ref{fig:ExampleCurves}. For $\varepsilon\to 0$, they converge in the pointwise sense as subsets of the Euclidean space. Similarly, figure \ref{fig:ExampleCurvature} shows some regularized curvature functions and the singular curvature in dependence of the arc length parameter. Note  that regularized curvatures and thus the corresponding geometric potentials are bounded. The family of curvatures approximates the limiting case in the $L^{2}(J)$-sense, when $\varepsilon\to 0$.
\begin{figure}
    \centering
    \includegraphics[width=\linewidth]{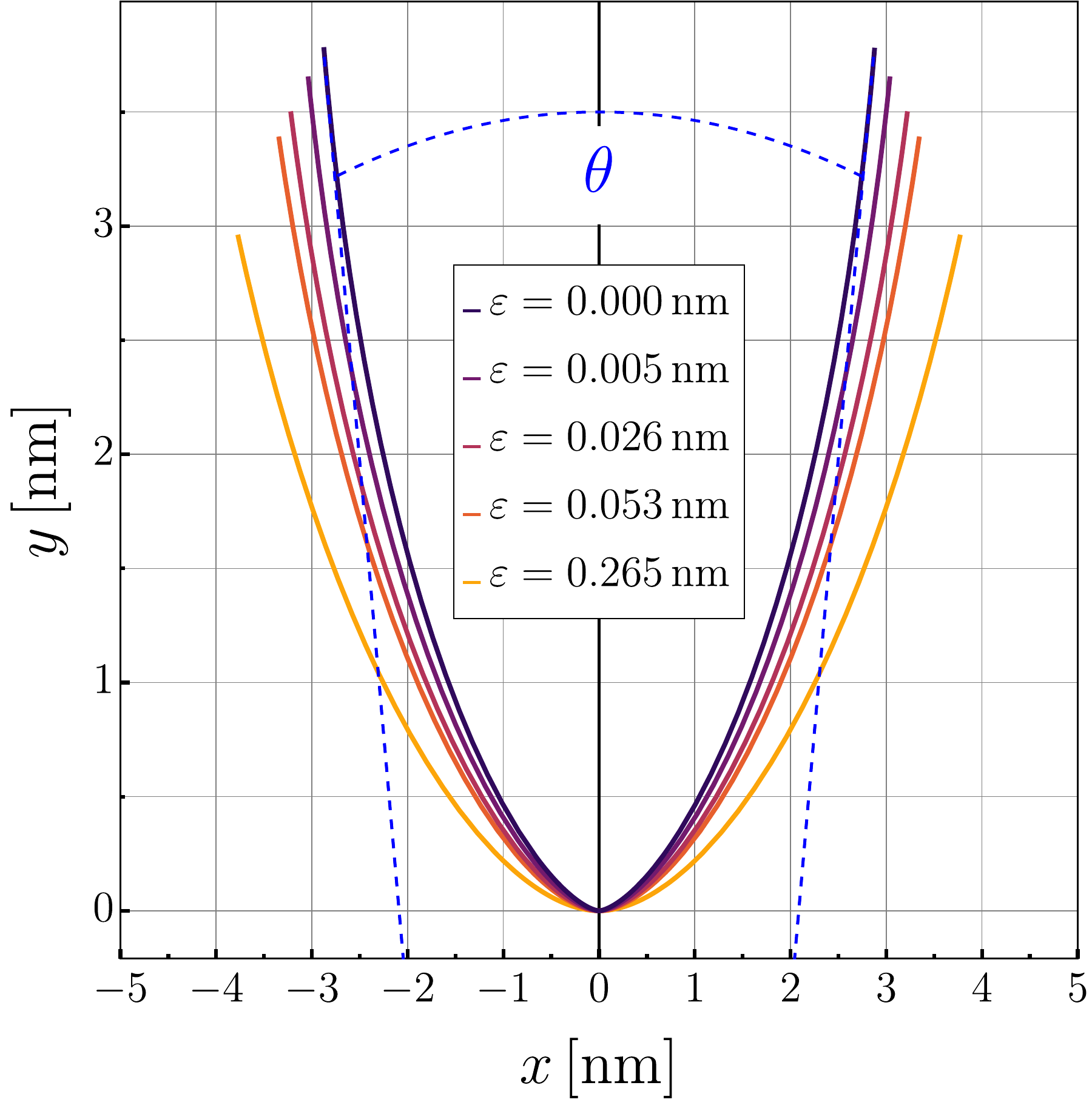}
    \caption{Four representatives of the curve family $\{\mathcal{X}_{\mathcal{M}_{\varepsilon}}\}_{\varepsilon \in \mathbb{R}^{+}}$ and the limiting degenerate curve $\mathcal{X}_{\mathcal{M}}$. $\theta$ is the opening angle of the latter. The subsets of Euclidean space converge in the pointwise sense. Parameters: $J = (-5, 5)\,\mathrm{nm}$, $\alpha = \frac{1}{2}$, $\theta = \frac{\pi}{8}$, $m = m_{e}$}
    \label{fig:ExampleCurves}
\end{figure}
\begin{figure}[h]
    \centering
    \includegraphics[width=\linewidth]{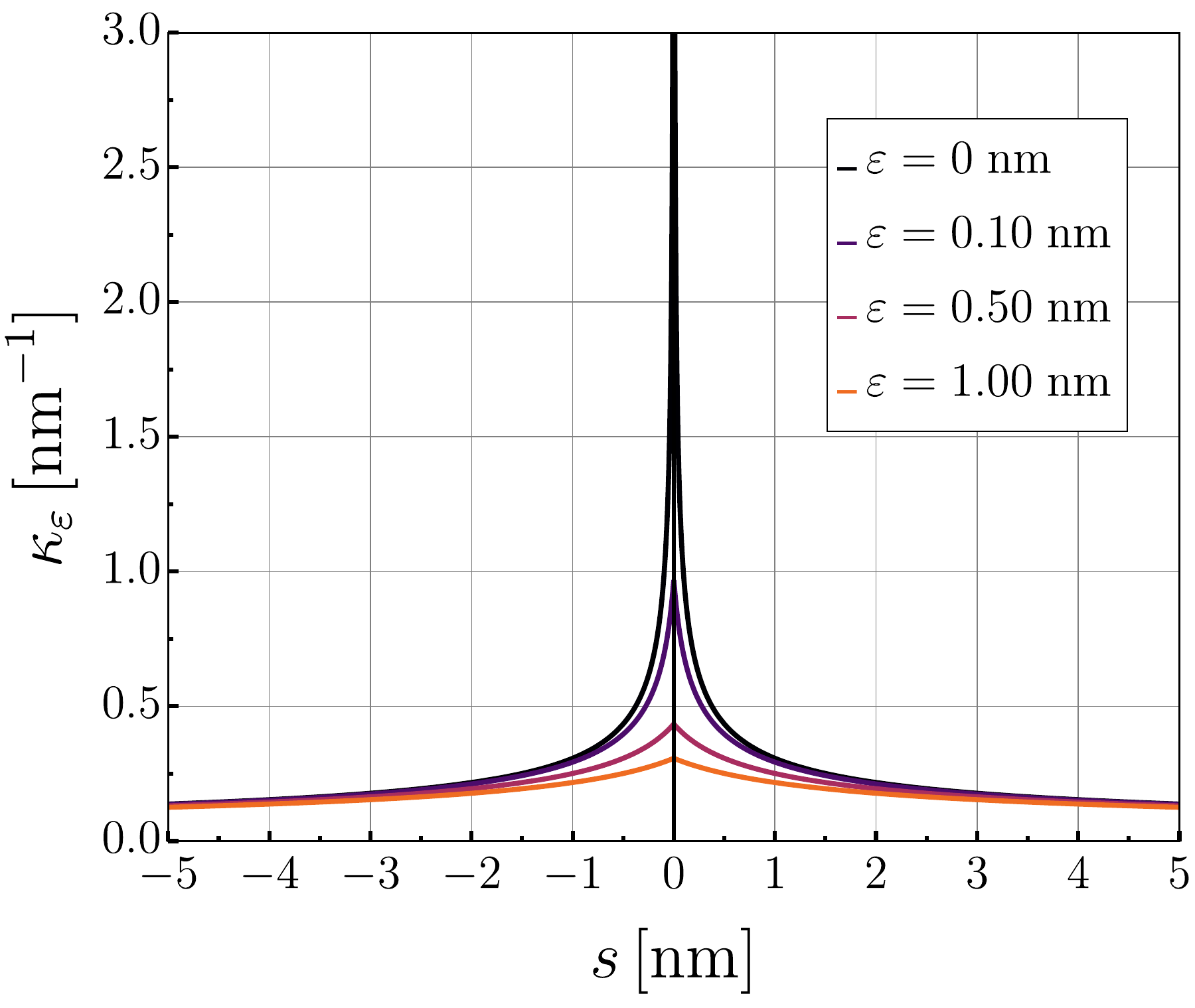}
    \caption{Curvature functions $\kappa_{\varepsilon}$ for four different representatives of the regularized curve family, and their divergent limit $\kappa$. The parameters are the same as in fig.\,\ref{fig:ExampleCurves}.}
    \label{fig:ExampleCurvature}
\end{figure}
\par
Using the regularization approach, we evaluated \eqref{eq:TangentialGleichung} numerically for the regular curves in dependence of $\varepsilon$, and calculated the eigenvalues with the corresponding eigenfunctions assuming Dirichlet boundary conditions. Therefore, the spectra are discrete. Figure \ref{fig:ExampleEnergies} shows the variance of the four lowest eigenvalues with the deformation parameter $\varepsilon$, and figure \ref{fig:ExampleWavefunctions} depicts the course of the probability density of the ground state wave function. The simulation confirms that in the limit $\varepsilon \rightarrow 0$ convergence of the eigenvalues and eigenfunctions occurs. In detail, we observe that the ground state energy lowers significantly and is the only one becoming negative. For our parameters, it reaches to an extrapolated value of approximately $-4.59\,\mathrm{meV}$. The corresponding wave function exhibits an increasingly narrow, and in the limit non-differentiable, peak centered on the degenerate locus, i.\,e. at $s=0$, where the geometric potentials are minimal. We emphasize that, since the eigenenergies converge to a finite value, this localization of the quantum state around the origin also stays finite. Thus, we obtained a solution for the quantum particle constrained to a degenerate curve.
\par
Interpreting the data as an approximation to the case of an infinitely large region, $J = \mathbb{R}$, we suspect that the ground state can be attributed the property of a bound state, which is geometrically induced by the sharp bent. The existence of such a state was proven \citep{Klaus1977, Goldstone1992, Exner1989} in the regular case. Then, the states of positive energy resemble a continuous part of the spectrum consisting of scattering states.
\begin{figure}
    \centering
    \includegraphics[width=\linewidth]{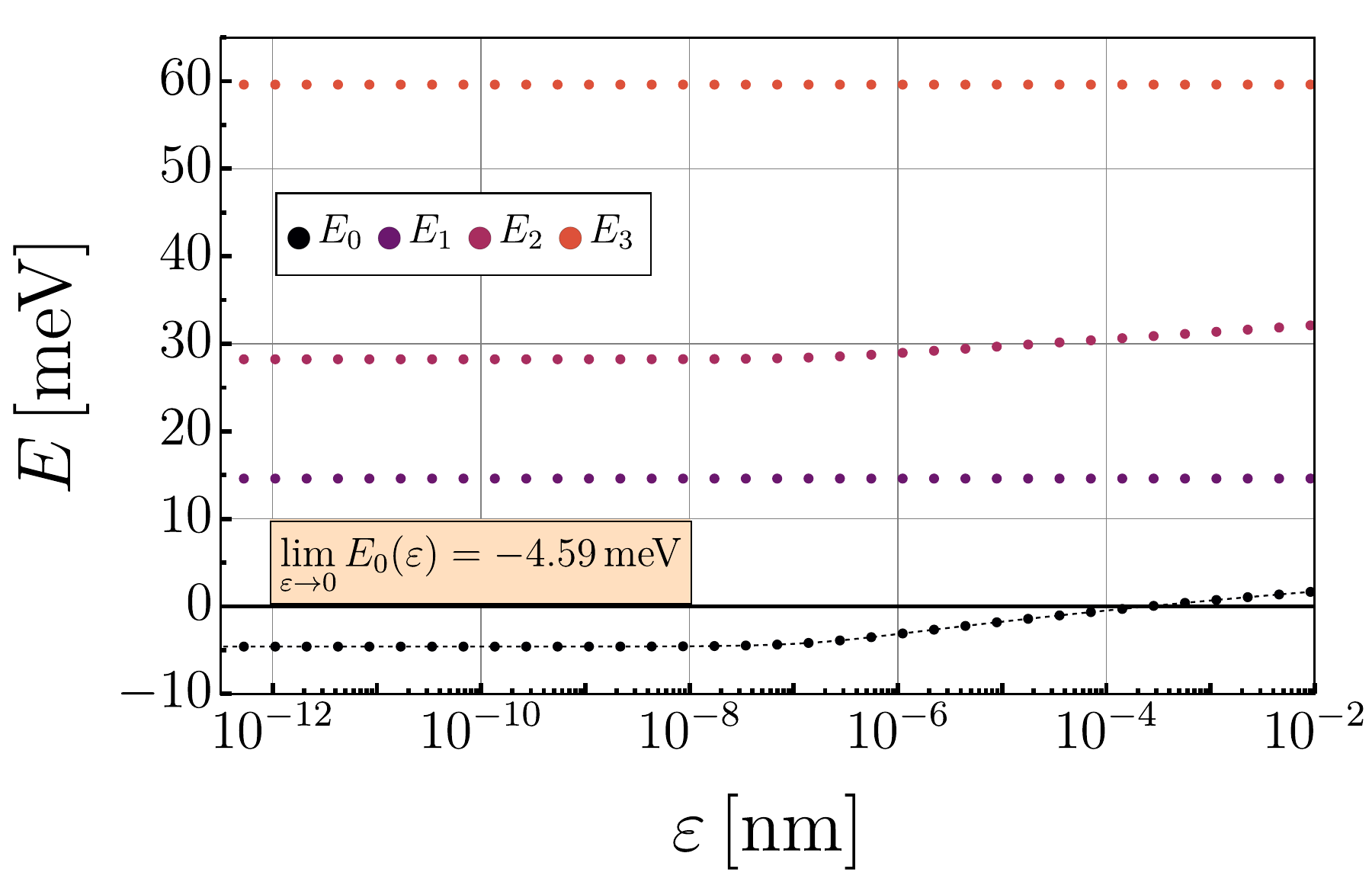}
    \caption{The four lowest energy eigenvalues of the regularized Hamilton operators in dependence of the de-singularization parameter $\varepsilon$. In the limit, the ground state is negative, and resembles a bound state when $J = \mathbb{R}$. The parameters are the same as in fig.\,\ref{fig:ExampleCurves}.}
    \label{fig:ExampleEnergies}
\end{figure}
\begin{figure}
    \centering
    \includegraphics[width=\linewidth]{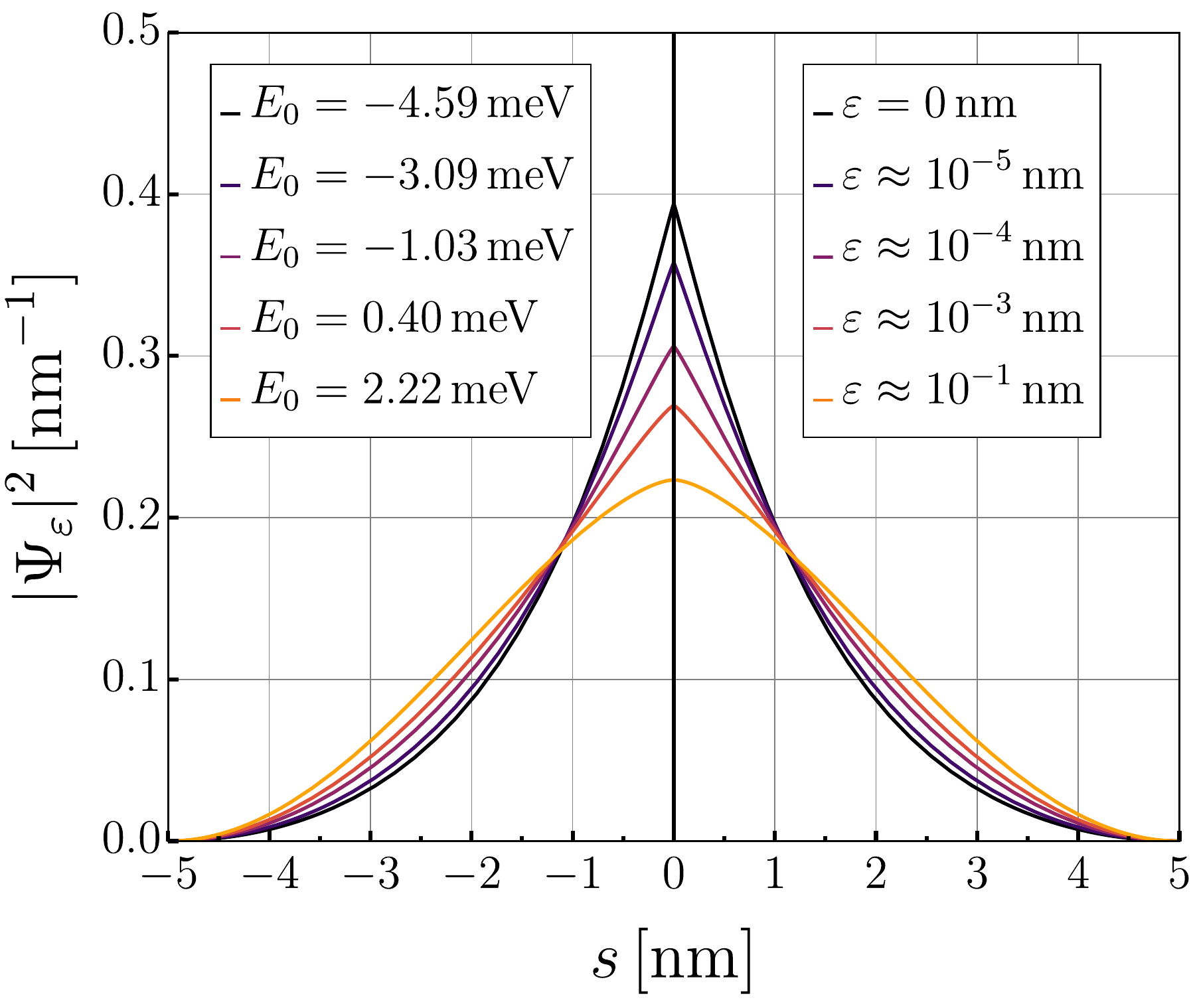}
    \caption{The probability density of the ground state for four regularized Hamilton operators, and their non-differentiable limit. The parameters are the same as in fig.\,\ref{fig:ExampleCurves}.}
    \label{fig:ExampleWavefunctions}
\end{figure}
\begin{figure}
    \centering
    \includegraphics[width=\linewidth]{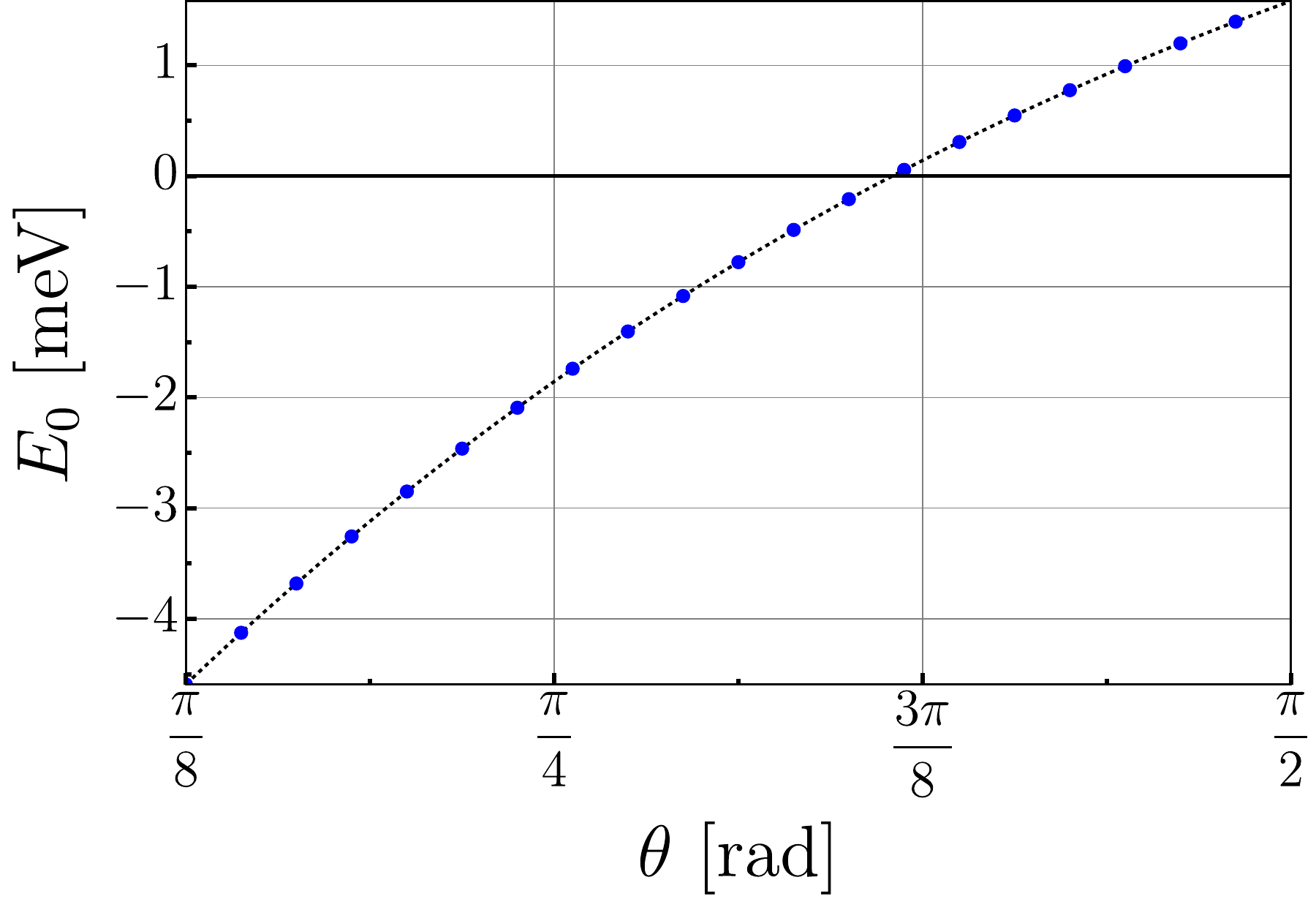}
    \caption{The ground state energy of the limiting degenerate case in dependence of the turn of the curve. Other parameters are the same as in fig.\,\ref{fig:ExampleCurves}.}
    \label{fig:ExampleAngleVariance}
\end{figure}
\FloatBarrier
\par
In addition, we investigated the sensitivity of the so obtained ground-state energy on the precise geometry of the degenerate curve by comparing otherwise similar structures with varying opening angle $\theta$. The result is depicted in figure \ref{fig:ExampleAngleVariance}. When the bending angle is increased, the ground state energy increases nonlinearly and ultimately becomes positive. The exact value should have an influence on e.\,g. transport processes or light-matter interaction, which conversely can be used for structural analysis of sharply bent quantum wires.
%
%%%%%%%%%%%%%%%%%%%%%%%%%%%%%%%%%%%%%%%%%%%%%%%%%%%%%%%%%%%%%%%%%%%%%%%%%%%%%%%%%%%%%%%%%%%%%%%%%%%%%%%%%%%%%%%%%%%%%%%%%%%%%%%%%%%
%%%%%%%%%%%%%%%%%%%%%%%%%%%%%%%%%%%%%%%%%%%%%%%%%%%%%%%%%%%%%%%%%%%%%%%%%%%%%%%%%%%%%%%%%%%%%%%%%%%%%%%%%%%%%%%%%%%%%%%%%%%%%%%%%%%
% Section V
\section{Conclusions}
\label{sec:Conclusion}
We discussed and presented  a CPA formalism to deal with a quantum mechanical particle under spatial irregular constraints and classified the various  cases, which we called degenerate submanifolds. For these cases,  we pointed out the need  for  an extensions of  established  CPA models. For the particular class of plane degenerate curves whose degenerate locus is defined by divergences of the curvature $\kappa\in L^{1}(J)$, we applied results from singular Sturm-Liouville theory and showed that if additionally $\kappa^{2}\in H^{-1}(J)$ holds, a family of regular curves can be found approximating the degenerate case both in the sense of subsets of Euclidean space and on the level of differential equations. Thus, the quantum state of the particle confined to the degenerate curve is well-defined as the limit of the converging eigensystems of the regularization.  The extended scheme is demonstrated  numerically showing that the singular curvatures in space can induce bound states with non-differentiable wave functions. 
So far the study is limited to an effective single particle model. It is established that in onedimensional systems electron-electron interaction leads to profound change in electronic and magnetic system properties \citep{giamarchi_book}. Constrained strongly correlated systems is a research topic yet to be addressed.
%
%%%%%%%%%%%%%%%%%%%%%%%%%%%%%%%%%%%%%%%%%%%%%%%%%%%%%%%%%%%%%%%%%%%%%%%%%%%%%%%%%%%%%%%%%%%%%%%%%%%%%%%%%%%%%%%%%%%%%%%%%%%%%%%%%%%
%%%%%%%%%%%%%%%%%%%%%%%%%%%%%%%%%%%%%%%%%%%%%%%%%%%%%%%%%%%%%%%%%%%%%%%%%%%%%%%%%%%%%%%%%%%%%%%%%%%%%%%%%%%%%%%%%%%%%%%%%%%%%%%%%%%
% Section A
\begin{appendices}
\begin{widetext}
\section{Mathematical Statements and Proofs}
Here, we give detailed mathematical comments on the statements made in the main text concerning the regularization ansatz, and some proofs.
\subsection{Convergence of Curves}
\label{sec:Appendix_Convergence_of_Curves}
\begin{theorem}
    Let $J = (a,b)$ be an open interval and $\mathcal{X}_{\mathcal{M}}$ a degenerate planar curve with a curvature $\kappa\in L^{1}(J)$. Further, let $\lbrace \kappa_\varepsilon \rbrace_{\varepsilon\in\R^{+}} \subset C^{0}(\overline{J})$ be a set of bounded, continuous functions that converge in $L^{1}(J)$ to $\kappa\in L^{1}(J)$, i.\,e. $\norm{\kappa_{\varepsilon} - \kappa}_{L^{1}(J)} \rightarrow 0$, $(\varepsilon\rightarrow 0)$.\\[5pt]
    Then the corresponding regular curves $\lbrace\mathcal{X}_{\mathcal{M}_{\varepsilon}}\rbrace_{\varepsilon\in \R^{+}}$, coinciding with $\mathcal{X}_{\mathcal{M}}$ at the initial value $s=a$ such that $\left[\mathcal{X}_{\mathcal{M}_{\varepsilon}}(a), \mathcal{X}'_{\mathcal{M}_{\varepsilon}}(a), \mathcal{X}''_{\mathcal{M}_{\varepsilon}}(a) \right] = \left[\mathcal{X}_{\mathcal{M}}(a), \mathcal{X}'_{\mathcal{M}}(a), \mathcal{X}''_{\mathcal{M}}(a) \right]$ for all $\varepsilon \in \mathbb{R}^{+}$, converge pointwise to the degenerate curve $\mathcal{X}_{\mathcal{M}}$, and the boundary values of the curves are understood in the sense of limits.
\end{theorem}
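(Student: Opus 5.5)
We will prove this by writing every plane curve through its tangent angle. For a planar arc-length curve the Frenet--Serret system \eqref{eq:Frenet-Serret} is solved explicitly by
\begin{align*}
    \textbf{t}(s) = \big(\cos\gamma(s), \sin\gamma(s)\big)\ ,\qquad \gamma(s) = \gamma_{0} + \int_{a}^{s}\kappa(s')\,\mathrm{d}s'\ ,
\end{align*}
with $\mathcal{X}_{\mathcal{M}}(s) = \mathcal{X}_{\mathcal{M}}(a) + \int_{a}^{s}\textbf{t}(\tau)\,\mathrm{d}\tau$. This representation is valid for the degenerate curve as well. Since $\kappa\in L^{1}(J)$, the angle $\gamma$ is absolutely continuous on $\overline{J}$, so $\textbf{t}$ is continuous and $\mathcal{X}_{\mathcal{M}}\in C^{1}$. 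The formula holds classically on each component of $J\setminus\{0\}$, and continuity of the integrals glues the pieces across the degenerate locus. We define $\gamma_{\varepsilon}$, $\textbf{t}_{\varepsilon}$ and $\mathcal{X}_{\mathcal{M}_{\varepsilon}}$ in the same way from $\kappa_{\varepsilon}$. The matching condition at $s=a$ fixes the same point and the same initial angle $\gamma_{0}$. It also fixes the same $\mathcal{X}''(a)=\kappa(a)\textbf{n}(a)$, which is understood as a limit and is only needed to pin down the orientation of $\textbf{n}$. Consequently the rigid-motion freedom is removed, and the regular curves are exactly the ones given by these formulas.

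\textbf{Step 1 (angles).} For every $s\in\overline{J}$,
\begin{align*}
    \vert\gamma_{\varepsilon}(s)-\gamma(s)\vert \le \int_{a}^{s}\vert\kappa_{\varepsilon}-\kappa\vert\,\mathrm{d}s' \le \Vert\kappa_{\varepsilon}-\kappa\Vert_{L^{1}(J)}\to 0\ .
\end{align*}
This gives uniform convergence of the angles on $\overline{J}$, including the endpoints, which are understood as limits.

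\textbf{Step 2 (frames).} Since $\cos$ and $\sin$ are $1$-Lipschitz,
\begin{align*}
    \Vert\textbf{t}_{\varepsilon}(s)-\textbf{t}(s)\Vert_{2}\le\vert\gamma_{\varepsilon}(s)-\gamma(s)\vert\ .
\end{align*}
The same bound holds for $\textbf{n}_{\varepsilon}=(-\sin\gamma_{\varepsilon},\cos\gamma_{\varepsilon})$. Hence the frames converge uniformly.

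\textbf{Step 3 (curves).} We estimate
\begin{align*}
    \Vert\mathcal{X}_{\mathcal{M}_{\varepsilon}}(s)-\mathcal{X}_{\mathcal{M}}(s)\Vert_{2}\le (b-a)\,\Vert\kappa_{\varepsilon}-\kappa\Vert_{L^{1}(J)}\ .
\end{align*}
This requires $J$ bounded, which is the setting of the paper. It yields uniform, hence pointwise, convergence of the parametrizations. Uniform convergence of the parametrizations implies Hausdorff convergence of the traces. The boundary values at $s=b$ exist because all integrands are integrable, so the endpoint statements are simply Steps 1--3 evaluated at $s=b$.

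\textbf{Step 4 (curvature) and main obstacle.} Pointwise convergence of the curvature itself does not follow from $L^{1}$ convergence alone. We will handle it in two ways. First, by passing to a subsequence $\varepsilon_{k}\to 0$ with $\kappa_{\varepsilon_{k}}\to\kappa$ almost everywhere, by the Riesz--Fischer subsequence theorem. Second, and more naturally in the paper's setting, by noting that if $\kappa_{\varepsilon}\to\kappa$ locally uniformly on $J\setminus\{0\}$, as in the example $\kappa_{\varepsilon}=K(\vert s\vert+\varepsilon)^{-\alpha}$, then pointwise convergence holds off the degenerate locus directly. The main obstacle I expect is this interpretation issue, namely which notion of convergence of the derived quantity $\kappa_{\varepsilon}$ is claimed. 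The convergence of $\mathcal{X}$, $\textbf{t}$ and $\textbf{n}$ is routine once the angle representation is in place.
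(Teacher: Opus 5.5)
Your proof is correct, but it takes a different route from the paper's.

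\textbf{The paper's route.} The paper treats the Frenet--Serret equations as a linear system $\mathbf{y}'=\mathbf{A}\mathbf{y}$ in $\mathbb{R}^{4}$ with $L^{1}$ coefficients. It obtains the degenerate frame as a Carath\'eodory solution and applies Gr\"onwall's lemma twice. The first application bounds $\Vert\mathbf{y}\Vert_{2}$ by a constant $M$. The second is applied to $\delta\mathbf{y}=\mathbf{y}_{\varepsilon}-\mathbf{y}$ after splitting the integral equation into $I_{1}+I_{2}$. This yields $\Vert\delta\mathbf{y}\Vert_{2}\le M M_{\varepsilon}\Vert\kappa_{\varepsilon}-\kappa\Vert_{L^{1}(J)}$, with constants growing exponentially in $\Vert\kappa\Vert_{L^{1}(J)}$ and $\Vert\kappa_{\varepsilon}\Vert_{L^{1}(J)}$. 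Integrating $\delta\mathbf{t}$ then gives the estimate for the curves.

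\textbf{Your route.} You use that in the plane the frame rotation is one-dimensional (the rotation group is abelian), so the system integrates explicitly through the turning angle. Both Gr\"onwall steps are replaced by the one-line bound $\vert\gamma_{\varepsilon}-\gamma\vert\le\Vert\kappa_{\varepsilon}-\kappa\Vert_{L^{1}(J)}$. This buys you:
\begin{itemize}
\item a sharper, constant-free estimate, with uniform convergence of frames and traces;
\item no need for the bound on $\Vert\kappa_{\varepsilon}\Vert_{L^{1}(J)}$ that keeps $M_{\varepsilon}$ bounded, which the paper uses only tacitly;
\item a transparent role for the initial data, since $\mathbf{n}(a)$ is fixed by $\mathbf{t}(a)$ and the orientation.
\end{itemize}
What the paper's route buys is dimension-independence. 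It carries over to space curves with torsion or higher Frenet systems, where no explicit angle solution exists.

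\textbf{Step 4.} This step goes beyond what the theorem asserts, since the theorem concerns only the curves. Your caution there is justified: $L^{1}$-convergence alone does not give pointwise convergence of $\kappa_{\varepsilon}$.

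\textbf{Minor points.}
\begin{itemize}
\item Componentwise $1$-Lipschitz continuity of $\cos$ and $\sin$ only gives a factor $\sqrt{2}$. The constant $1$ holds because $\gamma\mapsto(\cos\gamma,\sin\gamma)$ has unit speed, so chord $\le$ arc. Either constant suffices.
\item Gluing across $s=0$ needs more than continuity of $\int_{a}^{s}\kappa$: it also needs continuity of the tangent at the degenerate point, i.e.\ the $C^{1}$ property of defect class (i). Otherwise a corner, which has $\kappa=0$ away from the vertex, satisfies $\kappa\in L^{1}(J)$ but violates your formula. The paper makes the same assumption implicitly by taking the degenerate frame to be the Carath\'eodory solution. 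So this is not a gap relative to the paper, but you should state it.
\item Pointwise convergence does not require $J$ to be bounded. For fixed $s$, the factor $(b-a)$ can be replaced by $(s-a)$.
\end{itemize}
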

\begin{proof}
    For a $C^{2}$-curve $\function{\mathcal{X}_{\mathcal{M}_{\varepsilon}}}{J}{\R^{2}}$, $\varepsilon \in \mathbb{R}^{+}$, parametrized by arc length the Frenet-Serret equations read
    \begin{equation}
        \frac{\intd}{\intd s}\textbf{y}_{\varepsilon}(s) = \frac{\mathrm{d}}{\mathrm{d}s} \begin{pmatrix}
        \textbf{t}_{\varepsilon}(s) \\
        \textbf{n}_{\varepsilon}(s)
        \end{pmatrix}
        =
        \begin{pmatrix}
        0 & \kappa_{\varepsilon}(s) \\
        -\kappa_{\varepsilon}(s) & 0
        \end{pmatrix} \cdot \begin{pmatrix}
        \textbf{t}_{\varepsilon}(s) \\
        \textbf{n}_{\varepsilon}(s)
        \end{pmatrix}
        =
        \textbf{A}_{\varepsilon}(s)\cdot \textbf{y}_{\varepsilon}(s)\ ,\quad s\in J\ ,
    \end{equation}
    with the tangent and normal vector fields
    \begin{align}
        \begin{aligned}
            \textbf{t}_{\varepsilon}(s) = \mathcal{X}_{\mathcal{M}_{\varepsilon}}'(s)\ , \qquad 
            \textbf{n}_{\varepsilon}(s) = \frac{\mathcal{X}_{\mathcal{M}_{\varepsilon}}''(s)}{\norm{\mathcal{X}_{\mathcal{M}_{\varepsilon}}''(s)}_{2}} = \frac{1}{\kappa_{\varepsilon}(s)}\,\textbf{t}_{\varepsilon}'(s) \ .
        \end{aligned}
    \end{align}
    Now, let $\textbf{y} = (\textbf{t\ \ \textbf{n}})^{T}$ and $\textbf{A}$ denote the tuple of tangent and normal vector fields and the Frenet-Serret matrix corresponding to the similar problem describing the degenerate curve $\mathcal{X}_{\mathcal{M}}$, respectively. If $\kappa\in L^{1}(J)$, one can show by the theorem of Carath\'eodory that in this case there exists a unique solution in the extended sense $\textbf{y}$ with $\textbf{t},\textbf{n}\in \mathrm{AC}(J)$ in a neighborhood of the initial condition, so that $\mathcal{M}$ is a $C^{1}$-curve.
    \par
    In the first step we will show that the difference
    \begin{align}
        \begin{aligned}
            \delta \textbf{y}(\varepsilon; s) \coloneqq \textbf{y}_{\varepsilon}(s) - \textbf{y}(s)
            &=
            \delta \textbf{y}(\varepsilon;a) + \int\limits_{a}^{s} \left[ \textbf{A}_{\varepsilon}(t)\cdot \textbf{y}_{\varepsilon}(t)-\textbf{A}(t) \cdot \textbf{y}(t)\right] \intd t \\
            &=            
            \int\limits_{a}^{s} \left[ \textbf{A}_{\varepsilon}(t) \cdot \left( \textbf{y}_{\varepsilon}(t) - \textbf{y}(t)\right) + \left( \textbf{A}_{\varepsilon}(t) - \textbf{A}(t) \right) \cdot \textbf{y}(t) \right] \intd t \\
            &
            = \int\limits_{a}^{s} \textbf{A}_\varepsilon(t)\cdot \delta \textbf{y}(\varepsilon; t) \intd t + \int\limits_{a}^{s} \delta \textbf{A}(\varepsilon; t)\cdot \textbf{y}(t) \intd t \doteq I_{1}(s) + I_{2}(s)\ ,
        \end{aligned}
    \end{align}
    where $\delta \textbf{y}(\varepsilon;a) = \textbf{y}_{\varepsilon}(a)-\textbf{y}(a)=\textbf{0}$ by assumption, converges to zero int the pointwise sense as $\varepsilon \to 0$. Observing that
    \begin{align}
        \norm{\textbf{y}(s)}_{2} = \norm{\textbf{y}(a) + \int\limits_{a}^{s} \textbf{A}(t) \cdot \textbf{y}(t) \intd t}_{2} \leq \norm{\textbf{y}(a)}_{2} + \int\limits_{a}^{s} \norm{\textbf{A}(t) \cdot \textbf{y}(t)}_{2} \intd t = \norm{\textbf{y}(a)}_{2} + \int\limits_{a}^{s} \vert \kappa(t)\vert\,\norm{\textbf{y}(t)}_{2}\intd t\ ,
    \end{align}
    we reach to the following inequality by the lemma of Grönwall,
    \begin{align}
        \norm{\textbf{y}(s)}_{2} \leq \norm{\textbf{y}(a)}_{2}\,\left[ 1 + \int\limits_{a}^{s} \vert \kappa(t)\vert\,\mathrm{exp}\left( \int\limits_{t}^{s} \vert\kappa(\tau)\vert \intd \tau \right) \intd t \right]\ .
    \end{align}
    Varying over the entire interval $J$, this implies the existence of an upper bound on the solution $\textbf{y}$ which is given by
    \begin{align}
        \norm{\textbf{y}(s)}_{2} \leq \norm{\textbf{y}(a)}_{2}\,\left[ 1 + \int\limits_{a}^{b} \vert\kappa(t)\vert\,\mathrm{exp}\left( \int\limits_{a}^{b} \vert\kappa(\tau)\vert \intd \tau \right) \intd t \right] = \norm{\textbf{y}(a)}_{2}\,\left[ 1 + \Vert \kappa \Vert_{L^{1}(J)}\,\ee^{\Vert \kappa \Vert_{L^{1}(J)}} \right] \eqqcolon M\in \mathbb{R}^{+}\ .
    \end{align}
    With that, we get the convergence of $I_{2}$ as
    \begin{align}
        \Vert I_{2}(s)\Vert_{2} \leq \int\limits_{a}^{s} \Vert \delta\textbf{A}(\varepsilon; t)\cdot\textbf{y}(t)\Vert_{2} \intd t = \int\limits_{a}^{s} \vert \kappa_{\varepsilon}(t) - \kappa(t)\vert\, \Vert \textbf{y}(t)\Vert_{2} \intd t \leq M\,\Vert \kappa_{\varepsilon} - \kappa \Vert_{L^{1}(J)} \to 0\ ,\quad (\varepsilon \to 0)\ .
    \end{align}
    Now, we prove the convergence of the sum of both integrals, $I_{1} + I_{2}$. Because of
    \begin{align*}
        \Vert \textbf{A}_{\varepsilon}(t) \cdot \delta\textbf{y}(\varepsilon; t)\Vert_{2} &= \vert \kappa_{\varepsilon}\vert \,\Vert \delta\textbf{n}(\varepsilon;t) - \delta\textbf{t}(\varepsilon; t) \Vert_{2}\ ,\\
        \Vert \delta\textbf{n}(\varepsilon;t) - \delta\textbf{t}(\varepsilon; t) \Vert_{2}^{2} &= \Vert \delta\textbf{n}(\varepsilon;t)  \Vert_{2}^{2} - 2\,\langle \delta\textbf{t}(\varepsilon; t), \delta\textbf{n}(\varepsilon; t)\rangle_{2} + \Vert \delta\textbf{t}(\varepsilon; t) \Vert_{2}^{2} \leq \Vert \delta\textbf{t}(\varepsilon; t) \Vert_{2} + \Vert \delta\textbf{n}(\varepsilon; t) \Vert_{2}^{2} = \Vert \delta\textbf{y}(\varepsilon; t) \Vert_{2}^{2}
    \end{align*}
    such that
    \begin{align}
    \begin{aligned}
         \norm{\delta \textbf{y}(\varepsilon; s)}_{2} &\leq \norm{I_{1}(s)}_{2} + \norm{I_{2}(s)}_{2} \leq M\,\norm{\kappa_{\varepsilon} - \kappa}_{L^{1}(J)} + \int\limits_{a}^{s} \Vert \textbf{A}_{\varepsilon}(t) \cdot \delta\textbf{y}(\varepsilon; t)\Vert_{2} \intd t \\
         &\leq
         M\,\norm{\kappa_{\varepsilon} - \kappa}_{L^{1}(J)} + \int\limits_{a}^{s} \vert \kappa_{\varepsilon}(t)\vert\,\Vert \delta\textbf{y}(\varepsilon; t) \Vert_{2} \intd t\ ,
    \end{aligned}
    \end{align}
    the lemma of Grönwall again leads to the inequality
    \begin{align}
         \norm{\delta \textbf{y}(\varepsilon; s)}_{2} \leq M\,\norm{\kappa_{\varepsilon} - \kappa}_{L^{1}(J)}\,\left[ 1 + \int\limits_{a}^{s} \vert \kappa_{\varepsilon}(t)\vert\,\mathrm{exp}\left( \int\limits_{t}^{s} \vert \kappa_{\varepsilon}(\tau)\vert \intd \tau \right) \intd t \right]\ .
    \end{align}
    Again, by varying over the entire interval $J$ we obtain an upper bound on $\delta \textbf{y}[\varepsilon;\cdot]$, yielding the limit
    \begin{align}
    \begin{aligned}
        \norm{\delta \textbf{y}(\varepsilon; s)}_{2} &\leq M\,\norm{\kappa_{\varepsilon} - \kappa}_{L^{1}(J)}\,\left[ 1 + \int\limits_{a}^{b} \vert \kappa_{\varepsilon}(t)\vert\,\mathrm{exp}\left( \int\limits_{a}^{b} \vert\kappa_{\varepsilon}(\tau)\vert \intd \tau \right) \intd t \right] \\
        &= M\,\norm{\kappa_{\varepsilon} - \kappa}_{L^{1}(J)}\,\left[ 1 + \Vert \kappa_{\varepsilon} \Vert_{L^{1}(J)}\,\ee^{\Vert \kappa_{\varepsilon} \Vert_{L^{1}(J)}} \right]\\
        &\eqqcolon
        M\,M_{\varepsilon}\,\norm{\kappa_{\varepsilon} - \kappa}_{L^{1}(J)} \to 0\ ,\quad (\varepsilon \to 0)\ .
    \end{aligned}
    \end{align}
    We set the difference in of the curve points associated with a specific value of $s$ as $\delta\mathcal{X}_{\mathcal{M}}(\varepsilon;\cdot) \coloneqq \mathcal{X}_{\mathcal{M}_{\varepsilon}} - \mathcal{X_{\mathcal{M}}}$. Because of $\delta\mathcal{X}_{\mathcal{M}}'(\varepsilon; \cdot) = \delta \textbf{t}(\varepsilon; \cdot)$ and $\delta\mathcal{X}_{\mathcal{M}}(\varepsilon;a) = 0$ by assumption, we finally obtain for every $s\in J$
    \begin{align}
    \begin{aligned}
        \norm{\delta\mathcal{X}_{\mathcal{M}}(\varepsilon; s)}_{2} &= \norm{\delta\mathcal{X}_{\mathcal{M}}(\varepsilon; a) + \int\limits_{a}^{s} \delta\mathcal{X}_{\mathcal{M}}'(\varepsilon; t) \intd t}_{2} \leq \int\limits_{a}^{s} \norm{\delta\textbf{y}(\varepsilon; t)}_{2}\intd t \\
        &\leq M\,M_{\varepsilon}\,(b-a)\,\norm{\kappa_{\varepsilon} - \kappa}_{L^{1}(J)} \to 0\ ,\quad (\varepsilon \to 0)\ .
    \end{aligned}
    \end{align}
    Thus, the family of curves converges in the pointwise sense.
\end{proof}
\subsection{Convergence of Spectra and Eigenfunctions}
\label{sec:Appendix_Convergence_of_Spectra}
In what follows, we assume that $I = (a,b) \subset \mathbb{R}$ is an open and bounded interval. Furthermore, $W^{1,1}(I)$ is the Sobolev space of Lebesgue integrable functions on $I$ possessing one weak derivative (that is also Lebesgue integrable), which in one dimension coincides with the space of absolutely continuous functions, i.\,e. $W^{1,1}(I) = \mathrm{AC}(I)$. As we described in the main text, extending the CPA formalism to a degenerate planar curve containing a degenerate locus $\Xi_{2,1}(\mathcal{M})$ that is given by divergences in the curvature results in the encounter of a singular potential-like correction term (the geometric potential). In the case of curves, we thus find a singular Sturm-Liouville problem. In the context of quantum mechanics, the Hamilton operator is demanded to be a self-adjoint endomorphism on a Hilbert space, and for this purpose we canonically choose $\left( L^{2}(I), \langle \cdot, \cdot\rangle_{L^{2}(I)} \right)$.
\par
We remark that there are at least two ways of solving this singular Sturm-Liouville equation we obtained. In one approach, removing the degenerate locus and considering the regular part only dissects $\mathcal{M}$ into several connected components and creates a multi-interval Sturm-Liouville problem. If we assume that there is only one isolated singularity, this would correspond to constructing a singular two-interval Sturm-Liouville boundary value problem, which is a subject where many recent results were obtained (see for example \citep{Wang2007, Sun2007, Cao2009, Zettl2021}). Constructing the model this way appears as a canonical approach. 
\par
Nevertheless, there is a different approach that can be applied to certain subclasses of singular problems, including the special case of dealing with a distribution of first order as a potential. It relies on redefining the quasi-derivatives and Lagrange boundary forms considered in Sturm-Liouville theory, rather than dissecting the configuration manifold. These adaptions allow to treat the problem as a regular (in the sense of Birkhoff) one-interval Sturm-Liouville boundary value problem, and was developed in \citep{A.M.Savchuk1999, Savchuk2003}. See also \citep{Eckhardt2013} and the references therein. In particular, it was shown that the solution of the singular problem can be calculated as the limit of the solutions of a one-parameter family of regularized problems. Here, we also focus on this approach and summarize some important points below.
\par
At first, we recall the regular case. Let there be a family of real functions $\left\{ u_{\varepsilon} \right\}_{\varepsilon \in \R^{+}} \subset C^{1}(I)$. Taking the classical derivative to set $v_{\varepsilon} = u_{\varepsilon}'$, we obtain a family of operators $\{ (\hat{L}_{\varepsilon}, \mathcal{D}(\hat{L}_{\varepsilon}) \}_{\varepsilon \in \R^{+}}$ defined by
\begin{align}
    \hat{L}_{\varepsilon} \coloneqq -\frac{\mathrm{d}^{2}}{\mathrm{d}x^{2}} + v_{\varepsilon}(x)\,\hat{I}\ .
\end{align}
As the potential $v_{\varepsilon} \in C^{0}(I) \subset L^{\infty}(I)$ is continuous and hence especially bounded, the associated maximal and minimal operators, $(\hat{L}_{\varepsilon}, \mathcal{D}_{\varepsilon}^{\mathrm{max}})$ and $(\hat{L}_{\varepsilon}, \mathcal{D}_{\varepsilon}^{\mathrm{min}})$ respectively, are constructed in the classical way \citep{Zettl2005}. The maximal domain associated with the mapping $\hat{L}_{\varepsilon}^{\mathrm{max}}: \mathcal{D}_{\varepsilon}^{\mathrm{max}} \to L^{2}(I)$ is given as
\begin{align}
    \mathcal{D}_{\varepsilon}^{\mathrm{max}} \overset{:}{=} \mathcal{D}(\hat{L}_{\varepsilon}^{\mathrm{max}}) = \left\{ \varphi \in L^{2}(I): \varphi, \varphi'\in W^{1,1}(I)\ \wedge\ \hat{L}_{\varepsilon}\varphi \in L^{2}(I) \right\} = H^{2}(I) \quad \forall\ \varepsilon\in\R^{+}\ ,
\end{align}
whereas for the minimal domain associated with the mapping $\hat{L}_{\varepsilon}^{\mathrm{min}}: \mathcal{D}_{\varepsilon}^{\mathrm{min}} \to L^{2}(I)$ we have the closure of $C_{c}^{\infty}(I)$ in $H^{2}(I)$. So we obtain the restriction of the maximal operator to a domain of functions vanishing at the boundary in the sense of traces,
\begin{align}
\begin{aligned}
     \mathcal{D}_{\varepsilon}^{\mathrm{min}} \overset{:}{=} \mathcal{D}(\hat{L}_{\varepsilon}^{\mathrm{min}}) &= \left\{ \varphi \in \mathcal{D}_{\varepsilon}^{\mathrm{max}}:\ \exists\,(h_{n})_{n\in\mathbb{N}}\subset C_{c}^{\infty}(I): \lim_{n\to\infty} \Vert h_{n} - \varphi \Vert_{H^{2}(I)} = 0 \right\} \\
     &= \left\{ \varphi \in \mathcal{D}_{\varepsilon}^{\mathrm{max}}: \varphi\vert_{\partial I} = 0\ \wedge\ \varphi'\vert_{\partial I} = 0 \right\} = H^{2}_{0}(I) \quad \forall\ \varepsilon \in \R^{+}\ .
\end{aligned}
\end{align}
For every $\varepsilon \in \R^{+}$ it holds that $( \hat{L}_{\varepsilon}^{\mathrm{max}} )^{\dagger} = \hat{L}_{\varepsilon}^{\mathrm{min}}$ and $( \hat{L}_{\varepsilon}^{\mathrm{min}} )^{\dagger} = \hat{L}_{\varepsilon}^{\mathrm{max}}$, which can be seen from the Lagrange boundary form
\begin{align}
    \mathcal{L}(\varphi, \psi) \coloneqq \langle \hat{L}_{\varepsilon}\varphi, \psi \rangle_{L^{2}(I)} - \langle \varphi, \hat{L}_{\varepsilon}\psi \rangle_{L^{2}(I)} = \left[ {\psi^{\ast}}'\,\varphi - \psi^{\ast}\,\varphi'  \right]_{a}^{b}\ .
    \label{eq:Lagrange_form}
\end{align}
In the context of quantum mechanics, the desired operator is a self-adjoint realization fulfilling
\begin{align}
    \hat{L}_{\varepsilon}^{\mathrm{min}} \subset \hat{L}_{\varepsilon} = \hat{L}_{\varepsilon}^{\dagger} \subset \hat{L}_{\varepsilon}^{\mathrm{max}}
\end{align}
in the sense of operator extensions.
\par
It is a well-known fact that in Sturm-Liouville theory the self-adjoint realizations correspond uniquely to different boundary conditions, see \citep{Zettl2005, Zettl2021} and references therein. Precisely, if we define
\begin{align}
    \mathbf{\Phi}(x) \coloneqq \begin{pmatrix}
        \varphi(x) \\
        \varphi'(x)
    \end{pmatrix}
\end{align}
and consider the regular two-point boundary value problem
\begin{align}
    \begin{cases}
        \hat{L}_{\varepsilon}\varphi(x) =  -\varphi''(x) + v_{\varepsilon}(x)\,\varphi(x) = \lambda\,\varphi(x) &: x\in I \\
        \hspace{1.755 cm} A\cdot \mathbf{\Phi}(a) + B\cdot \mathbf{\Phi}(b) = \mathbf{0}
    \end{cases}\ ,
\end{align}
the boundary conditions can equivalently be expressed via the boundary linear forms
\begin{align}
    \hat{U}_{j}[\varphi] \coloneqq A_{j1} \varphi(a) + A_{j2} \varphi'(a) + B_{j1}\varphi(b) + B_{j2}\varphi'(b) = 0\ ,\quad j\in \{ 1;2 \}\ .
\label{eq:boundary_linear_form}
\end{align}
In order to deliver a self-adjoint realization, the coefficient matrices $A,B\in \mathbb{C}^{2\times2}$ are demanded to fulfill the conditions
\begin{align}
\begin{aligned}
    \mathrm{(i)}&\ A \cdot E\cdot A^{\dagger} = B \cdot E\cdot B^{\dagger} \quad\text{with}\quad E = \begin{pmatrix}
        0 & -1 \\
        1 & 0
    \end{pmatrix} \quad \Leftrightarrow\quad A_{j1}^{\ast}A_{k2} - A_{k1}A_{j2}^{\ast} = B_{j1}^{\ast}B_{k2} - B_{k1}B_{j2}^{\ast}\ ,\quad j,k\in \{ 1;2 \}\ , \\[5pt]
    \mathrm{(ii)}&\ \mathrm{rank}(A \vert B) = 2\ .
\end{aligned}
\label{eq:conditions_sa}
\end{align}
Then, any self-adjoint realization $(\hat{L}_{\varepsilon}, \mathcal{D}(\hat{L}_{\varepsilon}))$ is readily obtained by specifying the domain as
\begin{align}
    \mathcal{D}(\hat{L}_{\varepsilon}) \coloneqq \left\{ \varphi \in \mathcal{D}^{\mathrm{max}}_{\varepsilon}: U_{j}[\varphi] = 0 \ \ \forall\,j\in \{ 1;2 \} \right\} ,
\end{align}
with $U_{j}$, $j\in \{1;2\}$ respecting \eqref{eq:conditions_sa}, as then the Lagrange boundary form \eqref{eq:Lagrange_form} vanishes while $\varphi$ and $\psi$ are taken from the same domain.
\par
Now, we consider the singular case where the operator definition
\begin{align}
    \hat{L} \coloneqq -\frac{\mathrm{d}^{2}}{\mathrm{d}x^{2}} + v(x)\,\hat{I}
\end{align}
contains a singular real potential which is assumed to fulfill $v = u'$ for some function $u\in L^{2}(I)$ in the distributional sense, i.\,e. $\langle u , h' \rangle_{L^{2}(I)} = - \langle v, h\rangle_{L^{2}(I)}\ \forall\ h\in \mathcal{D}(I) \doteq C_{c}^{\infty}(I)$. This corresponds to $v \in H^{-1}(I)$, see \citep{Adams2003, Hryniv2006, Albeverio2008}. In \citep{A.M.Savchuk1999} an adapted quasi-derivative of a function was introduced,
\begin{align}
     \varphi^{[1]} \coloneqq \varphi' - u\,\varphi\ ,
     \label{eq:quasiderivative}
\end{align}
so that the operator action can be redefined as
\begin{align}
    \hat{L}\varphi = -(\varphi^{[1]})' - u\,\varphi^{[1]} - u^{2}\,\varphi
\end{align}
and the Lagrange boundary form adapts to
\begin{align}
    \tilde{\mathcal{L}}(\varphi, \psi) \coloneqq \langle \hat{L}\varphi, \psi\rangle_{L^{2}(I)} - \langle \varphi, \hat{L}\psi \rangle_{L^{2}(I)} = \left[ (\psi^{\ast})^{[1]}\,\varphi - \psi^{\ast}\varphi^{[1]} \right]_{a}^{b}\ .
\end{align}
In these expressions, no distribution-valued terms appear anymore. Then, many relations of the regular case can be generalized and the singular case can be shown to possess similar properties when the weak derivatives and the usual quasi-derivatives used in Sturm-Liouville theory get replaced by this quasi-derivative \eqref{eq:quasiderivative} \citep{A.M.Savchuk1999, Savchuk2003, Eckhardt2013}. Precisely, the maximal operator $(\hat{L}, \mathcal{D}^{\mathrm{max}})$ is associated with the mapping $\hat{L}^{\mathrm{max}}: \mathcal{D}^\mathrm{max} \to L^{2}(I)$ with the maximal domain
\begin{align}
    \mathcal{D}^{\mathrm{max}} \overset{:}{=} \mathcal{D}(\hat{L}^{\mathrm{max}}) = \left\{ \varphi\in L^{2}(I): \varphi,\varphi^{[1]}\in W^{1,1}(I) \ \wedge\ \hat{L}\varphi\in L^{2}(I)  \right\}\ ,
\end{align}
and the minimal operator $(\hat{L}^{\mathrm{min}}, \mathcal{D}(\hat{L}^{\mathrm{min}}))$ with the mapping $\hat{L}^{\mathrm{min}}: \mathcal{D}^\mathrm{min} \to L^{2}(I)$ incorporating the minimal domain
\begin{align}
     \mathcal{D}^{\mathrm{min}} \overset{:}{=} \mathcal{D}(\hat{L}^{\mathrm{min}}) = \left\{ \varphi\in \mathcal{D}^{\mathrm{max}}: \varphi\vert_{\partial I} = 0\ \wedge\ \varphi^{[1]}\vert_{\partial I} = 0 \right\}\ .
     \label{eq:singular_minimal_domain}
\end{align}
Any self-adjoint realization $(\hat{L}, \mathcal{D}(\hat{L}))$ is determined by a domain
\begin{align}
    \mathcal{D}(\hat{L}) \coloneqq \left\{ \varphi \in \mathcal{D}^{\mathrm{max}}: \tilde{U}_{j}[\varphi] = 0 \ \ \forall\,j\in \{ 1;2 \} \right\} ,
\end{align}
where the boundary linear forms $\tilde{U}_{j}$, $j\in \{1;2\}$ express the general form of boundary conditions similar to \eqref{eq:boundary_linear_form} - \eqref{eq:conditions_sa}, just with adapted coefficient matrices. In the regular case the operator definitions based on both derivatives (after taking the closure of the minimal operator) are equivalent in the following sense: If it holds that $u' = v \in L^{1}(I)$, then $\varphi'\in W^{1,1}(I)\ \Leftrightarrow\ \varphi^{[1]}\in W^{1,1}(I)$. Thus, the Lagrange boundary forms and all statements leading to the general classifications of self-adjoint boundary conditions are equivalent. However, the operator definitions based on the quasi-derivative \eqref{eq:quasiderivative} still hold for the singular case. This is the basis for the regularization approach whose idea is to characterize the spectrum of the singular operator $(\hat{L}, \mathcal{D}(\hat{L}))$ by replacing it with a family of operators $\{ (\hat{L}_{\varepsilon}, \mathcal{D}(\hat{L}_{\varepsilon}) \}_{\varepsilon \in \R^{+}}$ according to the above description and taking the limit $\varepsilon \to 0$. All operators are understood as self-adjoint realizations.
\par
Nevertheless, note that the domains of the regularized operators and their singular counterparts generally do not coincide. Rather, the domains of the singular operators possess less structure as one can only show \citep{Hryniv2006, Albeverio2008} that $\mathcal{D}^{\mathrm{max}} \subset H^{1}(I)$ whereas $\mathcal{D}_{\varepsilon}^{\mathrm{max}} = H^{2}(I)$. Although there is a continuous embedding $H^{2}(I) \hookrightarrow H^{1}(I)$ of a dense subspace by the Sobolev embedding theorem, it is generally not true that one domain is included in the other, so we have $\mathcal{D}^{\mathrm{max}} \not\subset \mathcal{D}_{\varepsilon}^{\mathrm{max}}$ and $\mathcal{D}^{\mathrm{max}} \not\supset \mathcal{D}_{\varepsilon}^{\mathrm{max}}$. This means that in the limiting singular case neither do the wave functions have to be two times differentiable functions nor can the Hamilton operator be separated in kinetic and potential energies as usual because neither of these needs to be an individual observable (i.\,e. a self-adjoint endomorphism on $\left( L^{2}(I), \langle \cdot, \cdot \rangle_{L^{2}(I)} \right)$ with proper $L^{2}(I)$-domain).
\par
It is well-known that $\mathcal{D}_{\varepsilon}^{\mathrm{min}} = H^{2}_{0}(I)\subset L^{2}(I)$ is a dense subspace. In \citep{A.M.Savchuk1999} it is stated that the same is true for $\mathcal{D}^{\mathrm{min}} \subset L^{2}(I)$. Furthermore \citep{A.M.Savchuk1999}, it is known that in the case of real potentials all $(\hat{L}_{\varepsilon}, \mathcal{D}(\hat{L}_{\varepsilon}))$, $\varepsilon \in \mathbb{R}^{+}$, and $(\hat{L}, \mathcal{D}(\hat{L}))$ with domains defined as above are (closed) self-adjoint linear operators on the Hilbert space $\left( L^{2}(I), \langle \cdot, \cdot \rangle_{L^{2}(I)} \right)$. So by the spectral theorem all spectra are real, $\sigma(\hat{L}_{\varepsilon}),\sigma(\hat{L}) \subset \mathbb{R}$. In the regular case \citep{Zettl2005, Zettl2021}, all the spectra $\sigma(\hat{L}_{\varepsilon})$, $\varepsilon\in \mathbb{R}^{+}$, are discrete because of the boundary conditions incorporated in the respective domains $\mathcal{D}(\hat{L}_{\varepsilon})$. Precisely, there is an infinite but countable number of eigenvalues without any accumulation point. In the singular case, it was shown \citep{A.M.Savchuk1999, Savchuk2003} that the self-adjoint boundary conditions imply that the same is true for the spectrum $\sigma(\hat{L})$, because the resolvent operator is compact outside of the spectrum. The discreteness of the spectra is equivalent to the fact that the respective sets of eigenfunctions given as solutions of the eigenvalue equations
\begin{align}
\begin{aligned}
    \hat{L}_{\varepsilon}\varphi_{\varepsilon}^{n} &= \lambda_{\varepsilon}^{n}\varphi_{\varepsilon}^{n}\ ,\quad \varepsilon\in \mathbb{R}^{+}\ , \\
    \hat{L}\varphi^{n} &= \lambda^{n} \varphi^{n}\ ,
\end{aligned}
\end{align}
form orthonormal bases of $\left( L^{2}(I), \langle \cdot, \cdot \rangle_{L^{2}(I)} \right)$ with $n\in\mathbb{N}$ counting the eigenstates without multiplicity. Especially, the sets of eigenfunctions form Riesz bases. For any regular self-adjoint Sturm-Liouville operator, the multiplicity of every eigenvalue is less than or equal to two.
\par
Furthermore, we also state that all the spectra $\{ \sigma(\hat{L}_{\varepsilon})\}_{\varepsilon \in\mathbb{R}^{+}}$ as well as $\sigma(\hat{L})$ are semi-bounded from below. While this fact is well-known for the regularized cases \cite{Zettl2005, Zettl2021} it needs to be verified for the singular case. This was achieved in \citep{Eckhardt2013} for an even more general setting.
\par
Altogether, the following theorem (see below) holds, where we want to add a corollary to the previous considerations \citep{A.M.Savchuk1999, Savchuk2003} which explicitly states that invariant subspaces generated by the eigenfunctions of the regularized operators are included in those generated by the eigenfunctions of the singular operator, where a correspondence is set by the convergence of the eigenvalues.
\begin{theorem}
\label{Th:ConvergenceEigenfunctions}
Let $I = (a,b)\subset \mathbb{R}$ and $(\hat{L}, \mathcal{D}(\hat{L}))$ the operator with the associated mapping $\hat{L}: \mathcal{D}(\hat{L}) \to L^2(I)$ and the domain
\begin{align*}
\begin{aligned}
    \mathcal{D}(\hat{L}) &= \lbrace \varphi \in L^{2}(I) :\  \varphi,\varphi^{[1]} \in W^{1,1}(I) \ \wedge\ \tilde{U}_{1}[\varphi] = \tilde{U}_{2}[\varphi] = 0 \rbrace\ ,
\end{aligned}
\end{align*}
with $\tilde{U}_{1}$ and $\tilde{U}_{2}$ being linear forms encoding self-adjoint boundary conditions, and the definition
\begin{align*}
    \hat{L} \coloneqq -\frac{\mathrm{d}^2}{\mathrm{d}x^{2}} + v(x)\,\hat{I} \ ,
\end{align*}
where $v = u'$ for some primitive $u\in L^{2}(I)$ is a real-valued distribution of first order on $I$. Furthermore, let the family $\left\{ u_{\varepsilon}\right\}_{\varepsilon \in \mathbb{R}^{+}} \subset C^{1}(I)$ be a regularization on $I$ such that
\begin{align*}
    \norm{u_{\varepsilon} - u}_{L^{2}(I)} \to 0\ ,\quad (\varepsilon \to 0)\ .
\end{align*}
Then, setting the real functions $v_{\varepsilon} \coloneqq u'_{\varepsilon}$, there is a family of operators $\{ (\hat{L}_{\varepsilon} ,\mathcal{D}(\hat{L}_{\varepsilon})\}_{\varepsilon\in\mathbb{R}^{+}}$ with mappings \linebreak $\hat{L}_{\varepsilon}:\mathcal{D}(\hat{L}_{\varepsilon})\rightarrow L^2(I)$ all having the domain
\begin{align*}
    \mathcal{D}(\hat{L}_{\varepsilon}) = \left\{ \varphi \in H^{2}(I): \tilde{U}_{1}[\varphi] = \tilde{U}_{2}[\varphi] = 0 \right\} \quad \forall\ \varepsilon\in\mathbb{R}^{+}
\end{align*}
and obeying the definitions
\begin{align*}
    \hat{L}_{\varepsilon} \coloneqq -\frac{\mathrm{d}^{2}}{\mathrm{d}x^{2}}  + v_{\varepsilon}(x)\,\hat{I}\ .
\end{align*}
Assuming that each eigenvalue $\lambda^{n} \in \sigma(\hat{L})$ of $\hat{L}$ (counted by $n\in\mathbb{N}$ without multiplicities) has a corresponding multiplicity $m(\lambda^{n})\in \mathbb{N}$ such that no eigenvalue deficiencies occur, the following statements hold:
\begin{enumerate}
    \item $\{ (\hat{L}_{\varepsilon}, \mathcal{D}(\hat{L}_{\varepsilon})) \}_{\varepsilon\in\mathbb{R}^{+}}$ converges in the norm resolvent sense to $(\hat{L}, \mathcal{D}(\hat{L}))$ as $\varepsilon \to 0$.
    \item The family of spectra $\{ \sigma(\hat{L}_{\varepsilon}) \}_{\varepsilon \in \R^{+}}$ converges to $\sigma(\hat{L})$. The spectrum $\sigma(\hat{L})$ is real, discrete, and semi-bounded from below. The eigenvalues $\lambda_{\varepsilon}^{l_{n}}$ of the mappings $\hat{L}_{\varepsilon}$ converge to a corresponding eigenvalue $\lambda^{n}$ of $\hat{L}$ according to 
    \begin{align*}
        \left\vert \lambda_{\varepsilon}^{l_{n}} - \lambda^{n} \right\vert \leq C(n)\,\norm{ u_{\varepsilon} - u}_{L^{2}(I)}\ ,
    \end{align*}
    where $l_{n} \in\mathcal{I}_{n} \subset \mathbb{N}$ incorporates the partial splitting of degeneracy of the state defined by $\lambda^{n}$ during the regularization.
    \item The corresponding eigenspaces (invariant subspaces) associated with the $\lambda_{\varepsilon}^{l_{n}}$ are included in that of $\lambda^{n}$ as $\varepsilon \to 0$.
\end{enumerate}
\end{theorem}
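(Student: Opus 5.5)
Items 1 and 2 are essentially the content of \citep{A.M.Savchuk1999, Savchuk2003}, so the plan is to reproduce the mechanism behind norm resolvent convergence and then derive item 3 from it by Riesz projections. The key device is the quasi-derivative \eqref{eq:quasiderivative}. It turns the eigenvalue equation into a first-order system with $L^{1}$ coefficients,
\begin{align*}
    \frac{\mathrm{d}}{\mathrm{d}x}\begin{pmatrix} \varphi \\ \varphi^{[1]} \end{pmatrix} = \begin{pmatrix} u & 1 \\ -u^{2}-\lambda & -u \end{pmatrix}\begin{pmatrix} \varphi \\ \varphi^{[1]} \end{pmatrix}\ ,
\end{align*}
and likewise for $u_{\varepsilon}$, where for $u_{\varepsilon}\in C^{1}$ one has $\varphi^{[1]}_{\varepsilon} = \varphi' - u_{\varepsilon}\varphi$. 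Since $\Vert u_{\varepsilon}-u\Vert_{L^{2}}\to 0$, it follows that $u_{\varepsilon}\to u$ in $L^{1}$ and $u_{\varepsilon}^{2}\to u^{2}$ in $L^{1}$. The coefficient matrices therefore converge in $L^{1}(I)$, uniformly for $\lambda$ in compact subsets of $\mathbb{C}$.

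\emph{Step 1 (fundamental solutions).} I would run the same Grönwall argument as in the proof of the curve-convergence theorem above. It shows that the fundamental matrices $Y_{\varepsilon}(x,\lambda)$ converge uniformly in $x\in\overline{I}$ and locally uniformly in $\lambda$ to $Y(x,\lambda)$, with an error bounded by $C\,\Vert u_{\varepsilon}-u\Vert_{L^{2}}$. This requires the boundary forms $\tilde U_{j}$ to be read in the quasi-derivative variables $(\varphi,\varphi^{[1]})$ at $a$ and $b$ for both operators; this is how the statement is meant, with the domains of $\hat L_{\varepsilon}$ written using the adapted forms. As a consequence, the characteristic determinants $\Delta_{\varepsilon}(\lambda)=\det\bigl(A+B\,Y_{\varepsilon}(b,\lambda)\bigr)$ are entire in $\lambda$ and converge locally uniformly to $\Delta(\lambda)$.

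\emph{Step 2 (resolvent).} For $\lambda\notin\sigma(\hat L)$, I would write the resolvent as an integral operator whose Green kernel is built from $Y(\cdot,\lambda)$ and $\Delta(\lambda)^{-1}$; the same construction gives the kernel for each $\varepsilon$. The kernels $G_{\varepsilon}$ are bounded and converge uniformly to $G$ on $I\times I$. A Hilbert--Schmidt estimate then yields $\Vert(\hat L_{\varepsilon}-\lambda)^{-1}-(\hat L-\lambda)^{-1}\Vert\le C(\lambda)\Vert u_{\varepsilon}-u\Vert_{L^{2}}$, which is item 1. Compactness of the resolvent gives a real, discrete spectrum. For semi-boundedness I would either invoke \citep{Eckhardt2013} or use the quadratic form $\int|\varphi'|^{2}-2\,\mathrm{Re}\int u\,\varphi\,\overline{\varphi'}$. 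Since $u\in L^{2}$, the second term is form-bounded with relative bound $0$, by $\Vert\varphi\Vert_{\infty}^{2}\le\delta\Vert\varphi'\Vert^{2}+C_{\delta}\Vert\varphi\Vert^{2}$. The same argument applied to $u_{\varepsilon}$ gives a lower bound uniform in $\varepsilon$.

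\emph{Step 3 (eigenvalues and eigenspaces).} Since the eigenvalues are the zeros of $\Delta$, Hurwitz's theorem applies. Around each $\lambda^{n}$ I would fix a small circle $\Gamma_{n}$ avoiding the rest of $\sigma(\hat L)$. For small $\varepsilon$, the circle $\Gamma_{n}$ then encloses exactly $m(\lambda^{n})$ eigenvalues $\lambda^{l_{n}}_{\varepsilon}$ of $\hat L_{\varepsilon}$, counted with multiplicity; this is possible at most twice, which is where the splitting index set $\mathcal{I}_{n}$ comes from. The rate $C(n)\Vert u_{\varepsilon}-u\Vert$ follows by comparing the Riesz projections. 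The projections $P_{\varepsilon}=-\frac{1}{2\pi\ii}\oint_{\Gamma_{n}}(\hat L_{\varepsilon}-\lambda)^{-1}\mathrm{d}\lambda$ converge in norm to $P$, with error at most $\sup_{\Gamma_{n}}C(\lambda)\,\Vert u_{\varepsilon}-u\Vert$. Since $\Vert P_{\varepsilon}-P\Vert<1$, the ranges have equal dimension. The eigenvalue bound then follows from the min-max principle, or from the bound $|\lambda_{\varepsilon}-\lambda|\le\Vert(\hat L_{\varepsilon}-\lambda)^{-1}-(\hat L-\lambda)^{-1}\Vert\cdot\mathrm{dist}^{2}$ applied after shifting to a point where both operators are positive.

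For item 3, each normalized $\varphi^{l_{n}}_{\varepsilon}\in\mathrm{Ran}P_{\varepsilon}$ satisfies $\Vert\varphi^{l_{n}}_{\varepsilon}-P\varphi^{l_{n}}_{\varepsilon}\Vert\le\Vert P_{\varepsilon}-P\Vert\to0$. Hence the eigenspaces enter $\mathrm{Ran}P$ in the limit, which is the precise meaning of the inclusion. The main obstacle is Step 2: to get a bound uniform on the contour, $|\Delta|$ must stay bounded away from zero on $\Gamma_{n}$, and the comparison of the two domains must go through the common quasi-derivative boundary data, since the domains themselves differ ($H^{2}$ versus a subset of $H^{1}$). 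I would handle the first point through the uniform convergence $\Delta_{\varepsilon}\to\Delta$ on the compact set $\Gamma_{n}$, together with the choice of $\Gamma_{n}$ avoiding the zeros of $\Delta$.
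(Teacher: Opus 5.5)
Your proposal is correct, and for item 3 it is the paper's argument: Riesz projections over a small circle around $\lambda^{n}$ combined with norm resolvent convergence. Your bound $\Vert \varphi_{\varepsilon} - P\varphi_{\varepsilon}\Vert \le \Vert P_{\varepsilon} - P\Vert$ expresses the inclusion of eigenspaces more directly than the paper's fixed-$\Psi$ limit $P_{\varepsilon}\Psi \to P\Psi$. For items 1 and 2 the paper only cites Savchuk--Shkalikov and Eckhardt et al., whereas you sketch the mechanism (Gr\"onwall for the quasi-derivative system, uniformly convergent Green kernels, a form bound for semi-boundedness, min-max for the rate); this reconstructs the cited proofs rather than taking a different route, and it relies on reading the boundary forms in $(\varphi,\varphi^{[1]})$ for both operators, as you note.
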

\begin{proof}
    The statements one and two have been proven in \citep{A.M.Savchuk1999, Savchuk2003, Eckhardt2013} and related works. Here, we only want to focus on the third statement which follows as a corollary of the first two by examining the convergence of Riesz projections.\\
    In \citep{A.M.Savchuk1999} it is stated that $\mathcal{D}(\hat{L}) \subset L^{2}(I)$ is a dense subspace. As it is well-known that $H^{2}(I)\subset L^{2}(I)$ is a dense subspace, the same is true for $\mathcal{D}(\hat{L}_{\varepsilon}) \subset L^{2}(I)$. The eigenfunctions of the regular and singular operators, defined by
    \begin{align}
    \begin{aligned}
        \hat{L}_{\varepsilon}\varphi_{\varepsilon}^{n} &= \lambda_{\varepsilon}^{n}\varphi_{\varepsilon}^{n}\ ,\quad \varepsilon\in \mathbb{R}^{+}\ , \\
        \hat{L}\varphi^{n} &= \lambda^{n} \varphi^{n}
    \end{aligned}
    \end{align}
    and counted by $n\in\mathbb{N}$ without multiplicity, form orthonormal bases of $\left( L^{2}(I), \langle \cdot, \cdot \rangle_{L^{2}(I)} \right)$ and, in particular, Riesz bases.\\     
    Consider the $n$-th eigenvalue $\lambda^{n} \in \sigma(\hat{L})$ (counted without multiplicity). As under the given prerequisites the spectra of the regularized operators converge according to statements one and two, for every $\varepsilon \in \mathbb{R}^{+}$ there is a set of eigenvalues $\{\lambda_{\varepsilon}^{l_{n}}\}_{l_{n}\in \mathcal{I}_{n}} \eqqcolon \Omega_{\varepsilon}(n) \subset \sigma(\hat{L}_{\varepsilon})$ fulfilling $\vert \lambda_{\varepsilon}^{l_{n}} - \lambda^{n} \vert_{\R} \to 0$, $(\varepsilon \to 0)$. Thus, we can choose some $\varepsilon_{0}(n)\in\R^{+}$ such that
    \begin{align}
        \vert \lambda_{\varepsilon}^{l_{n}} - \lambda^{n} \vert_{\R} < \frac{1}{2} \min\limits_{m\in \mathbb{N}: m\neq n}\vert \lambda^{m} - \lambda^{n} \vert_{\R} \eqqcolon R(n) \quad \forall\,l_{n}\in \mathcal{I}_{n}\ \forall\, \varepsilon\in\R^{+}: \varepsilon \leq \varepsilon_{0}(n)\ .
    \end{align}
    As all spectra are real, a suitable curve to define the Riesz projectors is given by
    \begin{align}
        \label{eq:ProjectorCurve}
        \Gamma(n) = \left\lbrace z\in\C: z = \lambda^{n} + R(n)\,\ee^{\ii \alpha}\ \vert\ \alpha \in [0, 2\pi) \right\rbrace\ .
    \end{align}
    We observe that the Riesz projectors $\hat{P}_{\lambda^{n}}$ and $\hat{P}_{\Omega_{\varepsilon}(n)}$ are well-defined due to the discreteness of the spectra. Thus, for sufficiently small $\varepsilon\in \R^{+}: \varepsilon \leq \varepsilon_{0}(n)$ we get
    \begin{align}
    \label{eq:ProjectorConvergence}
    \begin{aligned}
        \norm{\hat{P}_{\Omega_{\varepsilon}(n)} - \hat{P}_{\lambda^{n}}} &= \norm{-\frac{1}{2\pi\ii} \oint\limits_{\Gamma(n)} \left( \left[ \hat{L}_{\varepsilon} - z\,\hat{I} \right]^{-1} - \left[\hat{L} - z\,\hat{I} \right]^{-1} \right)\intd z} 
        \leq
        \frac{1}{2\pi} \oint\limits_{\Gamma(n)} \norm{\left[\hat{L}_{\varepsilon} - z\,\hat{I}\right]^{-1} - \left[\hat{L} - z\,\hat{I} \right]^{-1}}\intd z \\
        &\leq
        \frac{\vert \Gamma(n)\vert}{2\pi}\,\sup_{z\in\Gamma(n)} \norm{ \left[\hat{L}_{\varepsilon} - z\,\hat{I}\right]^{-1} - \left[ \hat{L} - z\,\hat{I} \right]^{-1}} 
        \rightarrow
        0\ ,\quad (\varepsilon\rightarrow 0)\ ,
    \end{aligned}
    \end{align}
    because the family of compact resolvent operators $\hat{R}_{\varepsilon}(z) \coloneqq [\hat{L}_{\varepsilon} - z\,\hat{I}]^{-1}$ converges in the norm topology to the resolvent $\hat{R}(z) \coloneqq [\hat{L} - z\,\hat{I}]^{-1}$, which is hence also compact (this is the proof of \citep{A.M.Savchuk1999}, Theorem 3; see also the proof of \citep{A.M.Savchuk1999}, Lemma 5). With that, the definition of the operator norm implies for an arbitrary wave function
    \begin{align}
        \forall\ \Psi\in \mathcal{D}(\hat{L}): \left\Vert \left(\hat{P}_{\Omega_{\varepsilon}(n)} - \hat{P}_{\lambda^{n}} \right) \Psi\right\Vert_{L^{2}(I)} \leq \left\Vert \hat{P}_{\Omega_{\varepsilon}(n)} - \hat{P}_{\lambda^{n}} \right\Vert\cdot\Vert\Psi\Vert_{L^{2}(I)} \to 0\ ,\quad (\varepsilon \to 0)\ .
    \end{align}
    From this, it follows
    \begin{align}
        \lim_{\varepsilon\to 0} \hat{P}_{\Omega_{\varepsilon}(n)} = \hat{P}_{\lambda^{n}} \quad \Rightarrow\quad \forall\ \Psi\in \mathcal{D}(\hat{L}):\ \lim_{\varepsilon\to 0} \hat{P}_{\Omega_{\varepsilon}(n)}\Psi = \hat{P}_{\lambda^{n}}\Psi\ ,
    \end{align}
    where the each limit is taken in the respective sense. As $\Psi\in \mathcal{D}(\hat{L})$ is arbitrary, this means that every element of the respective eigenspace associated with $\Omega_{\varepsilon}(n) \subset \sigma(\hat{L}_{\varepsilon})$ converges to a function that lies inside the eigenspace associated with the eigenvalue $\lambda^{n}\in\sigma(\hat{L})$ as $\varepsilon \to 0$.
\end{proof}
\subsection{Discussion of Admissible Regularizations}
\label{sec:Appendix_admissible_regs}
In our considerations, the treatment of the singular Sturm-Liouville equation is directly linked to a differential geometric viewpoint as the divergent potential is interpreted as the curvature of a certain degenerate curve in the CPA. Thus, there are two convergence problems occurring simultaneously,
\begin{enumerate}
    \item the convergence of geometric objects, i.\,e. the point sets which are the traces of the curves with curvatures $\kappa_{\varepsilon}$, and
    \item the convergence of the associated operators $(\hat{L}_{\varepsilon}, \mathcal{D}(\hat{L}_{\varepsilon}))$.
\end{enumerate}
Under the assumptions in the above construction both the limiting trace of the curves and the limiting spectrum of the operators are unique and resemble the degenerate case that was to be approximated. So the assumptions listed in the theorems above define what we shall call an \textit{admissible regularization}.
\begin{definition}[Admissible Regularization]
    Let $J = (a,b) \subset \mathbb{R}$ be a bounded interval with $s\in J$ the arc length parameter. Let further be $\mathcal{X}_{\mathcal{M}}$ the parametrization of a degenerate planar curve such that its degenerate locus $\Xi_{2,1}(\mathcal{M)}$ is given by divergences of the curvature $\kappa\in L^{1}(J)$. Consider the special case that $\kappa$ behaves such that $v \coloneqq -\frac{1}{4}\kappa^{2}$ fulfills $v\in H^{-1}(J)$, i.\,e. there is a primitive $u\in L^{2}(J): v=u'$ in the distributional sense.\\[5pt]
    A family of regular curve parametrizations $\{\mathcal{X}_{\mathcal{M}_{\varepsilon}}\}_{\varepsilon\in\mathbb{R}^{+}}$ with curvature functions $\{ \kappa_{\varepsilon} \}_{\varepsilon\in\mathbb{R}^{+}} \subset C^{0}(J)$, yielding potential functions $v_{\varepsilon} \coloneqq -\frac{1}{4}\kappa_{\varepsilon}^{2}$ with classical antiderivatives $u_{\varepsilon}$, is called an \textit{admissible regularization of} $\mathcal{X}_{\mathcal{M}}$ if the following statements hold:
    \begin{enumerate}[label = (\roman*)]
        \item The curves converge in the geometric sense, so that
        \begin{align*}
            \left\vert \mathcal{X}_{\mathcal{M}_{\varepsilon}}(s) - \mathcal{X}_{\mathcal{M}}(s) \right\vert \to 0\ ,\ \ \left\vert \kappa_{\varepsilon}(s) - \kappa(s) \right\vert \to 0\ ,\quad (\varepsilon \to0)\quad \forall\ s\in J\setminus \mathcal{X}_{\mathcal{M}}^{-1}[\Xi_{2,1}(\mathcal{M})]\ ,
        \end{align*}
        i.\,e. the curve positions and all derived geometric quantities converge in the pointwise sense almost everywhere.
        \item The primitives of the curvature potentials converge in $(L^{2}(J), \langle \cdot, \cdot \rangle_{L^{2}(J)})$, so that
        \begin{align*}
            \Vert u_{\varepsilon} - u\Vert_{L^{2}(J)} \to 0\ ,\quad (\varepsilon \to 0)\ .
        \end{align*}
    \end{enumerate}
    A family $\{\mathcal{X}_{\mathcal{M}_{\varepsilon}}\}_{\varepsilon\in\mathbb{R}^{+}}$ for which only (i) is fulfilled is called a \textit{pre-admissible regularization}.
\end{definition}
At this point we want to discuss pre-admissible regularizations using curves that converge to the limiting curve in the geometrical sense, but whose operators in the effective CPA-Schrödinger equation do not converge in the appropriate sense. This means, in this case the manifolds and their geometric invariants such as curvature do converge as $(\mathcal{M}_{\varepsilon}, g_{\varepsilon}) \to (\mathcal{M}, g)$ and $\kappa_{\varepsilon} \to \kappa$ in the pointwise sense as $\varepsilon \to 0$, but the primitives $u_{\varepsilon}$ proportional to the anti-derivatives of the squares of the curvatures do not converge in the $L^{2}(J)$-sense so that the assumptions of theorem \ref{Th:ConvergenceEigenfunctions} are violated. Such non-admissible regularizations do exist and are characterized by an additive correction term as the following result shows. 
\begin{theorem}
    Let $J = (a,b) \subset \mathbb{R}$ be a finite interval and $\mathcal{X}_{\mathcal{M}}$ the arc length parametrization of a degenerate planar curve possessing a divergent curvature $\kappa$ (in $s_{0}\in J$) such that $v \coloneqq -\frac{\kappa^{2}}{4} \in H^{-1}(J)$ with primitive $u\in L^{2}(J)$. Let further $\left\{ \mathcal{X}_{\mathcal{M}_{\varepsilon}} \right\}_{\varepsilon\in\mathbb{R}^{+}}$ be a pre-admissible regularization of $\mathcal{X}_{\mathcal{M}}$. Then, the following statements are equivalent:
    \begin{enumerate}[label = (\roman*)]
        \item The primitives $\{ u_{\varepsilon}\}_{\varepsilon\in\mathbb{R}^{+}} \subset C^{1}(J)$ do not converge in the $L^{2}(J)$-sense, i.\,e.
        \begin{align*}
            \Vert u_{\varepsilon} - u\Vert_{L^{2}(J)} \not\to 0\ ,\quad (\varepsilon \to 0)\ .
        \end{align*}
        \item There is an admissible regularization with curves $(\mathcal{X}_{\tilde{\mathcal{M}_{\varepsilon}}})_{\varepsilon\in \mathbb{R}^{+}}$ with curvatures $\{ \tilde{\kappa}_{\varepsilon} \}_{\varepsilon\in \mathbb{R}^{+}}$ such that
        \begin{align*}
            \kappa_{\varepsilon}^{2} = \tilde{\kappa}_{\varepsilon}^{2} + \Delta_{\varepsilon}\ ,
        \end{align*}
        where $\Delta_{\varepsilon}\in C^{1}(J)$ is a function that vanishes almost everywhere in the pointwise sense,
        \begin{align*}
            \vert \Delta_{\varepsilon}(s)\vert \to 0\ ,\quad (\varepsilon\to 0)\ \ \forall\ s\in J\setminus \{ s_{0} \}
        \end{align*}
        but
        \begin{align*}
            U_{\varepsilon}(s) \coloneqq - \frac{1}{4}\,\int\limits_{a}^{s} \Delta_{\varepsilon}(s')\ \mathrm{d}s'
        \end{align*}
        does not vanish in the $L^{2}(J)$-sense,
        \begin{align*}
            \Vert U_{\varepsilon}\Vert_{L^{2}(J)} \not\to 0\ ,\quad (\varepsilon \to 0)\ .
        \end{align*}
    \end{enumerate}
\end{theorem}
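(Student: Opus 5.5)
The theorem asserts an equivalence, so I prove the two implications separately. Both directions rest on the same idea: compare the given pre-admissible family $\{\kappa_\varepsilon\}$ with one fixed admissible regularization. That such a family exists is implicit in the hypothesis $v=-\kappa^2/4\in H^{-1}(J)$. Concretely, I take a truncation or smoothing $\tilde\kappa_\varepsilon$ of $\kappa$, e.g. $\tilde\kappa_\varepsilon=\kappa\ast\rho_\varepsilon$ or the analogue of $K(|s|+\varepsilon)^{-\alpha}$ from Sec.~\ref{sec:Example}. Its potential primitives $\tilde u_\varepsilon(s)=-\frac14\int_a^s\tilde\kappa_\varepsilon^2$, shifted by suitable constants, converge to $u$ in $L^2(J)$. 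Its curvatures converge to $\kappa$ in $L^1(J)$, so by the theorem on convergence of curves it is pre-admissible, and therefore admissible.

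With this family fixed, I define
\begin{align*}
\Delta_\varepsilon\coloneqq\kappa_\varepsilon^2-\tilde\kappa_\varepsilon^2 .
\end{align*}
Since both $\kappa_\varepsilon\to\kappa$ and $\tilde\kappa_\varepsilon\to\kappa$ pointwise on $J\setminus\{s_0\}$, we get $\Delta_\varepsilon(s)\to 0$ for every $s\neq s_0$. By linearity of the primitive,
\begin{align*}
u_\varepsilon=\tilde u_\varepsilon+U_\varepsilon+c_\varepsilon ,
\end{align*}
where $c_\varepsilon$ is the difference of normalization constants. The triangle inequality then gives
\begin{align*}
\big|\,\|u_\varepsilon-u\|_{L^2(J)}-\|U_\varepsilon+c_\varepsilon\|_{L^2(J)}\big|\le\|\tilde u_\varepsilon-u\|_{L^2(J)}\to 0 .
\end{align*}
Hence $u_\varepsilon\to u$ in $L^2(J)$ if and only if $U_\varepsilon+c_\varepsilon\to 0$ in $L^2(J)$.

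For (i)$\Rightarrow$(ii), I use the admissible family constructed above. The identity $\kappa_\varepsilon^2=\tilde\kappa_\varepsilon^2+\Delta_\varepsilon$ holds by definition, and pointwise vanishing of $\Delta_\varepsilon$ was shown above. Non-convergence in (i), together with the displayed inequality, yields $\|U_\varepsilon+c_\varepsilon\|_{L^2(J)}\not\to0$. Because the primitive $u$ is only fixed up to constants, the normalization can be chosen so that $c_\varepsilon=0$, and then $\|U_\varepsilon\|_{L^2(J)}\not\to 0$. The regularity requirement $\Delta_\varepsilon\in C^1(J)$ needs care, since the $\kappa_\varepsilon$ are only assumed continuous. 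I would either choose $\tilde\kappa_\varepsilon$ smooth and note that the statement implicitly assumes $\kappa_\varepsilon\in C^1$, or weaken the requirement to $\Delta_\varepsilon\in C^0$, which is all the argument uses.

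For (ii)$\Rightarrow$(i), I start from an admissible family $\tilde\kappa_\varepsilon$ with $\tilde u_\varepsilon\to u$ and $\|U_\varepsilon\|_{L^2(J)}\not\to0$. With the same normalization convention, the decomposition gives $\|u_\varepsilon-u\|_{L^2(J)}\ge\|U_\varepsilon\|_{L^2(J)}-\|\tilde u_\varepsilon-u\|_{L^2(J)}$. Along a subsequence on which $\|U_\varepsilon\|_{L^2(J)}$ stays bounded away from zero, the right-hand side stays bounded away from zero as well, so $u_\varepsilon\not\to u$.

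I expect the main obstacle to be the constants. The primitive $u$ and the $u_\varepsilon$ are defined only up to additive constants, and the statement anchors $U_\varepsilon$ at $s=a$. I would fix this by a convention: normalize every primitive (of $u$, $u_\varepsilon$ and $\tilde u_\varepsilon$) at $s=a$, or by its mean over $J$, so that $c_\varepsilon=0$ throughout. A constant shift of the primitive does not change the potential, so this convention is harmless for the operators, but I would state it explicitly. A secondary point is the existence of at least one admissible regularization, which I would justify directly from $u\in L^2(J)$ through the mollification argument sketched above.
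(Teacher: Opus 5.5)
Your route is the paper's: fix one admissible comparison family $\tilde\kappa_\varepsilon$, set $\Delta_\varepsilon=\kappa_\varepsilon^2-\tilde\kappa_\varepsilon^2$, and pass (non-)convergence through $u_\varepsilon=\tilde u_\varepsilon+U_\varepsilon$ with the triangle inequality. The only difference is that the paper builds $\tilde\kappa_\varepsilon$ by truncating $\kappa$ on $B_\varepsilon(s_0)$ rather than mollifying.

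The genuine gap is the step you call secondary. The existence of an admissible regularization is \emph{not} implicit in $v\in H^{-1}(J)$, and mollification does not supply one. Mollifying $u$ gives derivatives $v\ast\rho_\varepsilon$ that are not of the form $-\tfrac14\tilde\kappa_\varepsilon^2$ with $\tilde\kappa_\varepsilon\to\kappa$. They need not even be nonpositive: in the paper's example, with an even mollifier, they are positive at $s_0$ for small $\varepsilon$. Mollifying $\kappa$ gives primitives of $(\kappa\ast\rho_\varepsilon)^2$, and these converge only if $\kappa$ is square integrable near $s_0$.

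In fact, if $\kappa^2$ is not integrable near $s_0$, no admissible regularization exists at all. Suppose $\tilde u_\varepsilon+d_\varepsilon\to w$ in $L^2(J)$ for some constants $d_\varepsilon$. Pass to an a.e.\ convergent subsequence and pick $s_1<s_0<s_2$ in its convergence set. Then
\[
\tfrac14\int_{s_1}^{s_2}\tilde\kappa_\varepsilon^2\,\mathrm{d}s=(\tilde u_\varepsilon+d_\varepsilon)(s_1)-(\tilde u_\varepsilon+d_\varepsilon)(s_2)\to w(s_1)-w(s_2)<\infty ,
\]
whereas Fatou's lemma and pointwise convergence off $s_0$ give $\liminf\int_{s_1}^{s_2}\tilde\kappa_\varepsilon^2\,\mathrm{d}s\ge\int_{s_1}^{s_2}\kappa^2\,\mathrm{d}s=\infty$.

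The same argument applies to every pre-admissible family. So in this regime (i) always holds and (ii) never does, and the equivalence itself is false. A concrete case is the paper's own example $\kappa^2=K^2\vert s\vert^{-1}$, which is exactly the situation the $H^{-1}$ hypothesis is meant to cover. The paper's proof has the same hole: its bound $\int_J\tilde\kappa_\varepsilon^2\le\int_J\kappa^2$ and its appeal to monotone convergence silently assume $\int_J\kappa^2<\infty$.

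What you have is therefore a proof only under the extra hypothesis $\kappa\in L^2(J)$, i.e.\ $v\in L^1(J)$ with $u$ absolutely continuous. In that setting it is complete, with $\Delta_\varepsilon\in C^0(J)$ in place of $C^1$, provided all primitives are anchored at $s=a$, including those entering the admissibility of $\tilde\kappa_\varepsilon$.

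On the constants: it is this anchoring of $u_\varepsilon$ and $\tilde u_\varepsilon$ that removes the $\varepsilon$-dependent $c_\varepsilon$, not the single free constant in $u$. Normalizing by the mean does not achieve this, because $U_\varepsilon$ is itself anchored at $a$. The anchoring is genuinely needed. With free constants, take an admissible $\kappa_\varepsilon$ and set $\tilde\kappa_\varepsilon^2=\kappa_\varepsilon^2+4h_\varepsilon$, with $h_\varepsilon$ a unit-mass bump in $(a,a+\varepsilon)$. This gives (ii) with $U_\varepsilon\to 1$ while $u_\varepsilon\to u$, contradicting (ii)$\Rightarrow$(i).

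Within that restricted setting, your subsequence argument is also sharper than the paper's, which assumes $\Vert u_\varepsilon-u\Vert_{L^2(J)}$ has a positive limit.
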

\begin{proof}
    $\mathrm{(i) \Rightarrow \mathrm{(ii)}}$: Let $(\mathcal{X}_{\mathcal{M}_{\varepsilon}})_{\varepsilon \in \mathbb{R}^{+}}$ be a pre-admissible regularization such that there exists a constant $C \in \mathbb{R}^{+}$ so that
    \begin{align}
        \Vert u_{\varepsilon} - u \Vert_{L^{2}(J)} \to C\ ,\quad (\varepsilon \to 0)\ .
    \end{align}
    There exists an admissible regularization $(\mathcal{X}_{\tilde{\mathcal{M}}_{\varepsilon}})_{\varepsilon \in \mathbb{R}^{+}}$ that is constructed as follows: Let $B_{\varepsilon}(s_{0}) \subset J$ be an open ball with radius $\varepsilon$ whose center is the singular point $s_{0} \in \Xi_{2,1}(\mathcal{M})$, and $f_{\varepsilon} \in C^{0}(B_{\varepsilon}(s_{0}))$ a function such that $\vert f_{\varepsilon}(s)\vert \leq \vert \kappa(s) \vert$ for all $s\in B_{\varepsilon}(s_{0})$. With this, define the regularization $(\mathcal{X}_{\tilde{\mathcal{M}}_{\varepsilon}})_{\varepsilon \in \mathbb{R}^{+}}$ such that the curvature functions are given by
    \begin{align}
        \tilde{\kappa}_{\varepsilon}(s) \coloneqq \begin{cases}
            \kappa(s) &: s\in J \setminus B_{\varepsilon}(s_{0}) \\
            f_{\varepsilon}(s) &: s\in B_{\varepsilon}(s_{0})
        \end{cases}\ .
    \end{align}
    Note that such a function $f_{\varepsilon}$ always exists due to the divergence of $\kappa$; for example, for sufficiently small $\varepsilon\in \mathbb{R}^{+}$ consider the linear function defined by the values of $\kappa$ at the limiting points $\partial B_{\varepsilon}(s_{0}) = \left\{ s_{0} - \varepsilon\,;\, s_{0} + \varepsilon \right\}$. Then, it holds $\tilde{\kappa}_{\varepsilon}^{2}(s) \leq \kappa^{2}(s)$ for all $s\in J$ and the monotonicity of the Lebesgue integral yields
    \begin{align}
        0 \leq \int\limits_{J} \tilde{\kappa}_{\varepsilon}^{2}(s)\ \mathrm{d}s \leq \int\limits_{J} \kappa^{2}(s)\ \mathrm{d}s\ .
    \end{align}
    Obviously, the pointwise convergence
    \begin{align}
        \vert \tilde{\kappa}_{\varepsilon}(s) - \kappa(s) \vert \to 0\ ,\quad (\varepsilon \to 0)\quad \forall\ s\in J\setminus\{ s_{0} \}
    \end{align}
    holds, so that $(\mathcal{X}_{\tilde{\mathcal{M}}_{\varepsilon}})_{\varepsilon \in \mathbb{R}^{+}}$ is a pre-admissible regularization. Furthermore, this means that convergence holds almost everywhere on $J$, so the monotone convergence theorem implies
    \begin{align}
       \Vert \tilde{u}_{\varepsilon} - u \Vert_{L^{2}(J)} &= \left\Vert \int \left( \tilde{v}_{\varepsilon}(s') - v(s') \right)\ \mathrm{d}s'  \right\Vert_{L^{2}(J)} = \left\Vert -\frac{1}{4}\,\int\limits_{J} \left( \tilde{\kappa}^{2}_{\varepsilon}(s) - \kappa^{2}(s) \right)\ \mathrm{d}s \right\Vert_{L^{2}(J)} \to 0\ ,\quad (\varepsilon \to 0)\ .
    \end{align}
    Thus, $(\mathcal{X}_{\tilde{\mathcal{M}}_{\varepsilon}})_{\varepsilon \in \mathbb{R}^{+}}$ is an admissible regularization.\\
    Then, define
    \begin{align}
        \Delta_{\varepsilon} \overset{:}{=} \kappa_{\varepsilon}^{2} - \tilde{\kappa}_{\varepsilon}^{2}\ \Leftrightarrow\ \kappa_{\varepsilon}^{2} = \tilde{\kappa}_{\varepsilon}^{2} + \Delta_{\varepsilon}\ \Leftrightarrow\ v_{\varepsilon} = \tilde{v}_{\varepsilon} - \frac{1}{4}\,\Delta_{\varepsilon}\ .
    \end{align}
    Obviously, $\Delta_{\varepsilon}\in C^{0}(J)$ for all $\varepsilon \in \mathbb{R}^{+}$, and there is a potential primitive proportional to its antiderivative
    \begin{align}
        U_{\varepsilon}(s) \coloneqq -\frac{1}{4}\,\int\limits_{a}^{s}\Delta_{\varepsilon}(s')\ \mathrm{d}s'\ .
    \end{align}
    From the above definition, one directly computes that the $\Delta_{\varepsilon}$ vanish in the pointwise sense,
    \begin{align}
        0 \leq \vert \Delta_{\varepsilon}(s)\vert  = \left\vert \kappa_{\varepsilon}^{2}(s) - \tilde{\kappa}_{\varepsilon}^{2}(s) \right\vert \leq \left\vert \kappa_{\varepsilon}^{2}(s) - \kappa^{2}(s)  \right\vert + \left\vert \tilde{\kappa}_{\varepsilon}^{2}(s) - \kappa^{2}(s)  \right\vert \to 0\ ,\quad (\varepsilon\to 0) \quad \forall\ s\in J\setminus\{ s_{0} \}\ ,
    \end{align}
    but their corresponding potential primitives do not vanish in the $L^{2}(J)$-sense,
    \begin{align}
        \begin{aligned}
        \Vert U_{\varepsilon} \Vert_{L^{2}(J)} &= \left\Vert -\frac{1}{4}\,\int \Delta_{\varepsilon}(s)'\ \mathrm{d}s'  \right\Vert_{L^{2}(J)} = \left\Vert -\frac{1}{4}\,\int \left( \kappa_{\varepsilon}^{2}(s') - \tilde{\kappa}_{\varepsilon}^{2}(s') \right)\ \mathrm{d}s'  \right\Vert_{L^{2}(J)} = \left\Vert \int \left( v_{\varepsilon}(s') - \tilde{v}_{\varepsilon}(s') \right)\ \mathrm{d}s'  \right\Vert_{L^{2}(J)} \\[5pt]
        &= \left\Vert  u_{\varepsilon} - \tilde{u}_{\varepsilon} \right\Vert_{L^{2}(J)} \geq \left\vert \Vert u_{\varepsilon} - u \Vert_{L^{2}(J)} - \Vert \tilde{u}_{\varepsilon} - u \Vert_{L^{2}(J)} \right\vert \to C\ ,\quad (\varepsilon \to 0)\ .
        \end{aligned}
    \end{align}
    \par
    $\mathrm{(ii) \Rightarrow \mathrm{(i)}}$: The proof follows by direct computation. Let the admissible regularization $(\mathcal{X}_{\tilde{\mathcal{M}}_{\varepsilon}})_{\varepsilon \in \mathbb{R}^{+}}$ be given as described. Then, because of
    \begin{align}
        \kappa_{\varepsilon}^{2} = \tilde{\kappa}_{\varepsilon}^{2} + \Delta_{\varepsilon}\ \Leftrightarrow\ v_{\varepsilon} = \tilde{v}_{\varepsilon} - \frac{1}{4}\,\Delta_{\varepsilon}\ \Leftrightarrow\ u_{\varepsilon} = \tilde{u}_{\varepsilon} + U_{\varepsilon} + C\ ,\ C\in\mathbb{R}\ ,
    \end{align}
    it holds that there is a constant $D\in \mathbb{R}^{+}$ such that
    \begin{align}
        \Vert u_{\varepsilon} - u\Vert_{L^{2}(J)} = \Vert \tilde{u}_{\varepsilon} + U_{\varepsilon} + C - u\Vert_{L^{2}(J)} \geq \left\vert \Vert \tilde{u}_{\varepsilon} - u \Vert_{L^{2}(J)} - \Vert U_{\varepsilon} + C \Vert_{L^{2}(J)}  \right\vert \to D\ ,\quad (\varepsilon \to 0)\ .
    \end{align}
\end{proof}
This result illustrates that using a regularization of curves that merely converge in the geometric sense, such that $\kappa_{\varepsilon} \to \kappa$ in a pointwise manor, is insufficient for our formulation of quantum mechanics. Instead, the $L^{2}(J)$-convergence of the primitives is necessary to correctly compute the limiting spectrum via this regularization approach.
\end{widetext}
\end{appendices}
%
%%%%%%%%%%%%%%%%%%%%%%%%%%%%%%%%%%%%%%%%%%%%%%%%%%%%%%%%%%%%%%%%%%%%%%%%%%%%%%%%%%%%%%%%%%%%%%%%%%%%%%%%%%%%%%%%%%%%%%%%%%%%%%%%%%%
%%%%%%%%%%%%%%%%%%%%%%%%%%%%%%%%%%%%%%%%%%%%%%%%%%%%%%%%%%%%%%%%%%%%%%%%%%%%%%%%%%%%%%%%%%%%%%%%%%%%%%%%%%%%%%%%%%%%%%%%%%%%%%%%%%%
% Acknowledgments
\section*{Acknowledgment}
The authors express their thank to Prof. Dr. habil. Tom\'{a}\v{s} Dohnal, Dr. Mathias Schäffner, and Dr. Nora Christine Doll (currently all at MLU, Institut für Mathematik) for valuable discussions on operator theory during the research process.
%
%%%%%%%%%%%%%%%%%%%%%%%%%%%%%%%%%%%%%%%%%%%%%%%%%%%%%%%%%%%%%%%%%%%%%%%%%%%%%%%%%%%%%%%%%%%%%%%%%%%%%%%%%%%%%%%%%%%%%%%%%%%%%%%%%%%
%%%%%%%%%%%%%%%%%%%%%%%%%%%%%%%%%%%%%%%%%%%%%%%%%%%%%%%%%%%%%%%%%%%%%%%%%%%%%%%%%%%%%%%%%%%%%%%%%%%%%%%%%%%%%%%%%%%%%%%%%%%%%%%%%%%
% Literature
%\FloatBarrier
\bibliography{bib_vertex}
\end{document}